\documentclass[12pt]{amsart}
\usepackage{xcolor}
\usepackage{soul, url}
\usepackage{amsmath, amsfonts, amssymb, amsthm}
\usepackage{hyperref, comment}
\usepackage{graphicx} 
\usepackage[margin=1.5in]{geometry}
\usepackage{amsmath, amsthm, latexsym, amssymb, amsfonts, epsf, xcolor, hyperref, mathrsfs}

\newcommand{\F}{\mathbb{F}}

\newcommand{\Z}{\mathbb{Z}}
\renewcommand{\H}{\mathcal{H}}
\newcommand{\I}{\mathcal{I}}

\newcommand{\rmv}[1]{}

\usepackage{xcolor}
\usepackage{url}
\usepackage{wrapfig}
\usepackage{float}
\usepackage{geometry}
\usepackage{amssymb}
\usepackage{amsfonts}
\usepackage{amsthm}
\usepackage{mathtools}
\usepackage{enumerate}
\usepackage{verbatim}
\usepackage{physics}
\usepackage{euscript}
\usepackage{tabularx}
\usepackage[ruled,vlined,linesnumbered]{algorithm2e}
\usepackage[utf8]{inputenc}
\numberwithin{equation}{section}
\usepackage{enumitem}

\newtheorem{theorem}{Theorem}[section]
\newtheorem{lemma}[theorem]{Lemma}
\newtheorem{proposition}[theorem]{Proposition}

\theoremstyle{definition}
\newtheorem{definition}[theorem]{Definition} 
\newtheorem{remark}[theorem]{Remark}
\newtheorem{example}[theorem]{Example}

\title[Permutation decoding of algebraic geometry codes]{Permutation decoding of algebraic geometry codes from Hermitian and norm-trace curves}

\author[M. Lichtenwalner]{Monica Lichtenwalner}
\address[Monica Lichtenwalner]{Department of Mathematics\\ Virginia Tech\\ 
Blacksburg, VA, USA}
\email{mlichtenwalner@vt.edu}

\author[H. H. L\'opez]{Hiram H. L\'opez}
\address[Hiram H. L\'opez]{Department of Mathematics\\ Virginia Tech\\ 
Blacksburg, VA, USA}
\email{hhlopez@vt.edu}

\author[G. L. Matthews]{Gretchen L. Matthews}
\address[Gretchen L. Matthews]{Department of Mathematics\\ Virginia Tech\\ 
Blacksburg, VA, USA}
\email{gmatthews@vt.edu}

\author[P. Seneviratne]{Padmapani Seneviratne}
\address[Padmapani Seneviratne]{Department of Mathematics\\ Texas A\&M University-Commerce\\
Commerce, TX, USA}
\email{padmapani.seneviratne@tamuc.edu}

\thanks{Monica Lichtenwalner was partially supported by 
partially supported by NSF DMS-2201075.
Hiram H. L\'opez was partially supported by the NSF Grants DMS-2401558 and DMS-2502705. Gretchen L. Matthews was partially supported by NSF DMS-2201075, NSF DMS-2502705, and the Commonwealth Cyber Initiative.}

\subjclass[2020]{94B05, 11T71, 14G50}

\keywords{Linear codes, permutation decoding, algebraic geometry codes, Hermitian, Norm-trace.}

\begin{document}

\begin{abstract}
Permutation decoding is a process that utilizes the permutation automorphism group of a linear code to correct errors in received words. Given a received word, a set of automorphisms, called a PD set, moves errors out of the information positions so that the original message can be determined. In this paper, we investigate permutation decoding for certain families of algebraic geometry codes. Automorphisms of the underlying curve are used to specify permutation automorphisms of the code.  Specifically, we describe permutation decoding sets that correct specific burst errors for one-point codes on Hermitian and norm-trace curves.
\end{abstract}

\maketitle

\section{Introduction}
Prange~\cite{Prange} introduced permutation decoding for cyclic codes in 1962, and MacWilliams extended the concept to linear codes in 1964~\cite{MacWilliams_PD}. See also other early work~\cite{Shiva_Fung_Tan_PD_double_error_cyclic_70, Shiva_Fung_PD_triple_error_72}. Research continued with a focus on minimal sets of code automorphisms that support permutation decoding~\cite{Gordon} with particular attention to the Golay code~\cite{Wolfmann}. Permutation decoding of abelian codes was considered by Chabanne in 1992~\cite{Chabanne_92}, and then for group codes in~\cite{Biglieri_group_codes_95}. 
Partial permutation decoding was introduced later \cite{partial_Key_05, Kroll_Vincenti_05} to address the correction of all sets of $r$ errors with $r \leq t$, where $t$ denotes the number of correctable errors of the code.  Permutation decoding has been considered for codes from graphs, finite geometries, and designs \cite{Key_PD_finite_geometries_06, key_PD_lattice07}, first- and second-order Reed-Muller codes \cite{Bernal_new_advances_first_order_RM_23, SENEVIRATNE20091967}, and some images of Reed-Solomon codes \cite{Lim_perm_decoding_images_RS_10}.  

Permutation decoding uses a set of automorphisms, called a PD set, of a linear code $C$ to move errors in received words out of the information positions. Hence, codes with many automorphisms are well-suited to this decoding procedure. Much of the prior work considers codes that inherit automorphisms from the underlying structure that defines them. This motivates consideration of algebraic geometry codes defined over curves over finite fields. For a curve $\mathcal X$ of genus $g \geq 2$ over the finite field $\F_q$ with $q$ elements, its automorphism group $Aut(\mathcal X)$ satisfies
$$
\mid Aut(\mathcal X) \mid \leq 16g^4,
$$
which can be significantly larger than the Hurwitz bound 
$$\mid Aut(\mathcal X') \mid \leq 84(g-1)$$
for curves $\mathcal X'$ over fields of characteristic $0$; except for the Hermitian curve defined by
$$
\mathcal X_q: y^q+y=x^{q+1}
$$
which, according to \cite{Stichtenoth_1973}, has an even larger automorphism group given by $$\mid Aut(\mathcal X_q) \mid=16g^4 + 4q^7-7q^6+5q^5-q^4-q^3$$ and genus $g=\frac{q(q-1)}{2}$.

In this paper, we consider permutation decoding of algebraic geometry codes on norm-trace and Hermitian codes to handle specific burst errors. We provide explicit PD sets for these errors. For algebraic geometry codes of positive genus, there has been little other progress beyond Joyner's conjectures regarding permutation decoding for codes from hyperelliptic curves \cite{Joyner_conjectural_PD_05}. Reed-Solomon codes (which are algebraic geometry codes on the projective line, a curve of genus $0$) are themselves cyclic, so their permutation decoding is a consequence of MacWilliams' work \cite{MacWilliams_PD} as well as that of Prange \cite{Prange}. 
Our approach differs from that of Ren~\cite{Ren_03}, whose goal was to reduce the decoding complexity for burst errors in Hermitian codes. It applies to a larger family of codes and harnesses the automorphism groups of the code, applying it to the algebraic geometry codes directly. Moreover, the codes can correct random errors as given by the minimum distance. In contrast, the modification by Ren, which optimizes the Hermitian code representation for burst error correction, limits its ability to handle random errors. We determine  PD sets for correcting certain burst errors of one-point codes on the Hermitian and norm-trace curves. We also emphasize the role of information sets for these codes. Code automorphisms are of independent interest. They now support a number of applications, including fault-tolerant quantum 
computation \cite{Grassl_Roetteler, quantum_sayginel_25}; 
polar coding \cite{bioglio,  1Johannsen_polar, polar_19,Pillet_polar}, proving that some codes achieve capacity \cite{RM_CA1, RM_CA2}; and code-based cryptography \cite{LESS_is_even_more, LESS}.

This paper is organized as follows. The current section concludes with an overview of the notation to be used throughout the paper. Section \ref{S:prelim} contains necessary preliminaries on permutation decoding, algebraic geometry codes, and code automorphisms. Section \ref{S:herm} focuses on permutation decoding of Hermitian codes and structured information sets. Section \ref{S:nt} extends permutation decoding to norm-trace codes with information sets given by sets of arbitrary rational points. The paper ends with a summary and concluding remarks in Section \ref{S:concl}. Examples are contained in each section to demonstrate the results.

{\bf{Notation.}}
For a positive integer $n$, set $[n]:=\{1,\dots,n\}$, an denote the symmetric group on $n$ symbols by $S_n$. For $I \subseteq [n]$ and $\sigma\in S_n$, we write $\sigma \left( I \right) := \left\{ \sigma(i): i \in I \right\}$. 
The set of all $m \times n$ matrices with entries in the finite field $\F_q$ with $q$ elements is denoted $\F_q^{m \times n}$. We take $\F_q^n:=\F_q^{1 \times n}$ and let $e_1, \dots, e_n$ denote standard basis vectors so that $e_i:=(0, \dots, 0, 1, 0, \dots, 0)$ with $i$ being its only non-zero coordinate. The $n \times n$ identity matrix is denoted by $I_n \in \F_q^{n \times n}$. The weight of a vector $v \in \F_q^n$ is $wt(v):=\mid \left\{ i \in [n]: v_i \neq 0 \right\} \mid$.

An $[n,k,d]$ code $C$ over $\F_q$ is an $\F_q$-subspace of $\F_q^n$ with dimension $k$ and minimum distance $d$, meaning the Hamming distance 
$d(u,v):=\mid \left\{ i \in [n]: u_i \neq v_i \right\} \mid$ between $u,v \in \F_q^n$ satisfies $d(c,c') \geq d$ for any $c,c' \in C$. Such a code $C$ is called $t$-error correcting code, with $t \leq \left\lfloor \frac{d-1}{2} \right \rfloor$. We also say that $C$ is an $[n,k]$ code when the distance parameter $d$ is not relevant in the context. A generator matrix for $C$ is any matrix $G \in \F_q^{k \times n}$ whose rows form a basis for $C$. The dual of $C$ is an $[n,n-k,d']$ code over $\F_q$ given by 
$C^{\perp}:=\left\{ w \in \F_q^n: wc^T = 0 \ \forall c \in C\right\}$. A parity check matrix for $C$ is any matrix that is a generator matrix for $C^{\perp}$, equivalently any matrix $H \in \F_q^{n-k \times n}$ such that $Hc^T=0$ for all $c \in C$.
A set of $k$ coordinates $I \subseteq [n]$ is an information set for $C$ provided the associated set of columns of $G$ is linearly independent. Given $y \in \F_q^n$ and $I:=\{i_1, \dots, i_l\} \subseteq [n]$, let $y\mid_I:=(y_{i_1}, \dots, y_{i_l}) \in \F_q^l$.

\section{Preliminaries} \label{S:prelim}

In this section, we present the background for the main results presented in the subsequent sections. We discuss permutation decoding and review algebraic geometry codes and their permutation automorphism groups. 
\subsection{Permutation decoding}

Given a code $C \subseteq \F_q^n$ and $ \sigma \in S_n$, 
$$\sigma(c)=\left( c_{\sigma^{-1}(1)}, c_{\sigma^{-1}(2)}, \dots, c_{\sigma^{-1}(n)} \right).$$
 The permutation automorphism group of the code $C$ is $$Aut(C):= \left\{ \sigma \in S_n : \sigma(C)=C \right\}.$$
In this paper, we say code automorphism to mean an element of the permutation automorphism group of the code, since we do not consider monomial automorphisms.
 \rmv{Equivalently, we can define this group as $$
Aut(C):=\left\{ P \in \F_q^{n \times n} : c P \in C \ \forall c \in C, P \textnormal{ is a permutation matrix} \right\}.
$$}
 
Consider an $[n,k,d]$ code $C$ over $\F_q$  with generator matrix $G=[I_k \mid A ] \in \F_q^{k \times n}$ and parity check matrix $H=[-A^T \mid I_{n-k}] \in \F_q^{n-k \times n}$. Then $I=[k]$ denotes the associated information set. Permutation decoding depends on the representation of the code, so it is necessary to fix a particular systematic form for the generator (and associated) parity-check matrix. For convenience, we take $G$ as above so that $I=[k]$. The method described below also applies to a generator matrix $G$ in standard form, meaning that $e_1, \dots, e_k \in \F_q^k$ are among its columns but not necessarily columns $1, \dots, k$ as in the systematic form. 

\begin{definition}
A subset $S \subseteq Aut(C)$ is an $r$-PD set for $C$ if for every set $\left\{ i_1, \dots, i_r \right\} \subseteq [n]$ of $r$ coordinate positions there is an element $\sigma \in S$ such that $\sigma \left( \left\{ i_1, \dots, i_r \right\} \right) \subseteq \left\{ k+1, \dots, n \right\}$.
If $r=t:=\left\lfloor \frac{d-1}{2} \right \rfloor$, then $S$ is called a PD set for $C$. If $r<t$, then $S$ is called a partial PD set for $C$. 
\end{definition}

\begin{lemma}\cite[Theorem 8.1]{Handbook_chapter} \label{L:syndrome_wt}
Consider an $[n,k,d]$ code $C$ over $\F_q$  with 
information set
$I=[k]$ and parity check matrix $H=[-A^T \mid I_{n-k}] \in \F_q^{n-k \times n}$. 
    For any received word $y \in \F_q^n$, $y\mid_I$ is correct (meaning there are no errors in the first $k$ coordinates of $y$) if and only if $wt(Hy^T) \leq t$ where $t=\left\lfloor \frac{d-1}{2} \right\rfloor.$
\end{lemma}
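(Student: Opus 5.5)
The statement implicitly assumes that $y=c+e$ with $c\in C$ and an error vector $e$ satisfying $wt(e)\le t$; this is the standing assumption of permutation decoding, and I will make it explicit. The plan is to compute the syndrome directly from the systematic shape $H=[-A^T\mid I_{n-k}]$ and then argue each implication separately. The reverse implication uses unique decodability within radius $t$.

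\textbf{Setup.} Write $e=(e_I, e_R)$ with $e_I\in\F_q^k$ and $e_R\in\F_q^{n-k}$. Since $Hc^T=0$, we get
$$Hy^T=He^T=-A^Te_I^T+e_R^T .$$

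\textbf{Forward direction.} Suppose $y\mid_I$ is correct, i.e.\ $e_I=0$. Then $Hy^T=e_R^T$. Hence $wt(Hy^T)=wt(e_R)=wt(e)\le t$.

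\textbf{Reverse direction.} Suppose $s:=Hy^T$ has $wt(s)\le t$. Form the vector $c':=y-(0,\dots,0,s^T)$, that is, subtract $s$ placed in the last $n-k$ coordinates.
\begin{enumerate}[label=(\roman*)]
\item Because the last $n-k$ columns of $H$ form $I_{n-k}$, we have $Hc'^T=s-s=0$, so $c'\in C$.
\item By construction $d(c',y)=wt(s)\le t$, and $d(c,y)=wt(e)\le t$.
\item The triangle inequality gives $d(c,c')\le 2t<d$, which forces $c=c'$.
\end{enumerate}
It follows that $e=y-c=(0,\dots,0,s^T)$. In particular $e_I=0$, so $y\mid_I$ is correct.

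\textbf{Main point of care.} The only delicate step is the reverse direction. There one must invoke the hypothesis $wt(e)\le t$: without it, a large error could produce a low-weight syndrome while $e_I\neq0$. Once that hypothesis is stated, the uniqueness argument $2t<d$ closes the proof.
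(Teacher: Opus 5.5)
Your proof is correct and is essentially the standard argument for the result the paper cites from \cite{Handbook_chapter} (the paper gives no proof of its own): your $c'=y-(0,\dots,0,s^T)$ is exactly the codeword $y\mid_I G$ that shares the information symbols of $y$, and $2t<d$ forces it to equal the transmitted codeword. You are also right to make the hypothesis $y=c+e$ with $wt(e)\le t$ explicit, since the paper's statement leaves it implicit and both directions fail without it.
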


{\bf Permutation Decoding Procedure:} Given a PD set $S=\left\{ \varphi_1, \dots, \varphi_s \right\} \subseteq Aut(C)$ and a received word $y \in \F_q^n$:
\begin{enumerate}
    \item Find $i \in [s]$ such that 
    \begin{equation} \label{E:syndrome_wt}
        wt \left( H \varphi_i(y)^T \right) \leq t.
    \end{equation}
    \item Decode $y$ as 
$$\varphi_i^{-1}\left( \varphi_i(y)\mid_I G \right) \in C.$$
\end{enumerate}

Step (1) may be handled by computing $wt \left( H \varphi_i(y)^T \right)$ for all $i \in [s]$ until $i$ is found satisfying Equation \ref{E:syndrome_wt}. 
Then, according to Lemma \ref{L:syndrome_wt}, $\varphi_i(y) \mid_I$ has no errors. Hence, if $S$ is an $r$-PD set, then any $r$ errors may be corrected using the Permutation Decoding Procedure. \rmv{We can extend this method to use monomial automorphisms, taking $S \subseteq MAut(C)$.}

The worst-case time complexity for the permutation decoding algorithm using an
$r$-PD set $S$ for an $[n,k]$ code is $O(nk|S|)$, which motivates the desire for small PD sets. Of course, not all codes have even partial PD sets. In fact,
$$
|S| \geq \left \lceil \frac{n}{n-k} \left \lceil \frac{n-1}{n-k-1}
\cdots \left \lceil \frac{n-r+1}{n-k-r+1} \right \rceil \cdots
\right \rceil \right \rceil;
$$
see \cite{Gordon,  Handbook_chapter, Schonheim_coverings_64}. 
Given that a code $C$ might not even have enough automorphisms to have a partial PD set, it is reasonable to consider sets of automorphisms that may be used to correct particular error patterns, as described in the following definition.

\begin{definition}
A subset $S \subseteq Aut(C)$ is an $r$-PD set for $E \subseteq [n]$ if for every set $\left\{ i_1, \dots, i_r \right\} \subseteq E$ of $r$ coordinate positions there is an automorphism $\sigma \in S$ such that $\sigma \left( \left\{ i_1, \dots, i_r \right\} \right) \subseteq \left\{ k+1, \dots, n \right\}$.
\end{definition}

It is immediate to see that an $r$-PD set is an $r$-PD set for $E=[n]$.

\subsection{Algebraic geometry codes}

Let $D=P_1+\dots+P_n$ and $G$ be divisors with  disjoint supports on a smooth projective curve
$\mathcal X$ over $\F_q$ so that $\deg G < n$  and $P_i \neq P_j$ for all $i,j \in [n]$ with $i \neq j$. 
Consider the algebraic geometry code $C(D,G):=\left\{ ev(f) : f \in \mathcal L(G) \right\}$
where $ev(f):=(f(P_1), \dots, f(P_n))$ and $\mathcal L(A):=\left\{ f \in \F_q(\mathcal X): (f) \geq -A \right\} \cup \{ 0 \}$ denotes the Riemann-Roch space of a divisor $A$. Recall that $C(D,G)$ is an $[n,\ell(G)-\ell(G-D), \geq n - \deg G]$ code where $\ell(G):=\dim_{\F_q} \mathcal L(G)$. Standard examples of algebraic geometry codes from curves of positive genus include the one-point Hermitian and norm-trace codes, which we describe below.

The Hermitian curve $\mathcal X_q:y^q+y=x^{q+1}$ is maximal over $\F_{q^2}$ and has genus $g=\frac{q(q-1)}{2}$. The set of affine $\F_{q^2}$-rational points of  $\mathcal X_q$ is
\[\mathcal X_q(\F_{q^2})=\bigcup_{a \in \F_{q^2}} \mathcal P_a,
\quad \text{where} \quad
\mathcal P_a:= \left\{ (a,b+\beta): \beta \in \ker (Tr) \right\}\]
for some $b \in \F_{q^2}$ satisfying $b^q+b=a^{q+1}$ and $Tr:\F_{q^2} \rightarrow \F_q$ denotes the trace map with respect to the extension $\F_{q^2}/\F_q$.
Similarly, the $\F_{q^2}$-rational points of $\mathcal X_q$ may be partitioned as
\[\mathcal X_q(\F_{q^2})=\bigcup_{b \in \F_{q^2}} \mathcal Q_b, \quad
\text{where} \quad
\mathcal Q_b:= \left\{ (\alpha a,b): \alpha \in \ker (N) \right\}
\quad
\]
for some $a \in \F_{q^2}$ satisfying $b^q+b=a^{q+1}$ and 
$N:\F_{q^2} \rightarrow \F_q$ denotes the norm map with respect to the extension $\F_{q^2}/\F_q$. In addition, $\mathcal X_q$ has a unique point at infinity $P_{\infty}:=(0:1:0)$.

The norm-trace curve over $\F_{q^s}$ is defined by
$$
\mathcal X_{q,s}:y^{q^{s-1}}+\dots + y^q+y=x^{\frac{q^s-1}{q-1}}
$$ where $s \geq 2$, meaning $Tr(y)=N(x)$ where 
$Tr: \F_{q^s} \to \F_q$ is the trace map with respect to the extension $\F_{q^s}/\F_q$.
The genus of $\mathcal X_{q,s}$ is $g=\frac{1}{2}\left( \frac{q^s-1}{q-1}-1\right) \left(q^{s-1}-1 \right)$ and 
 $\mathcal X_{q,s}$ has $q^{2s-1}+1$ $\F_{q^s}$-rational points, including a single point at infinity $P_{\infty}:=(0:1:0)$. When $s=2$, the norm-trace curve is a Hermitian curve. 
 
 When convenient, we write $P_{ab}:=(a,b)$ for affine points on these curves. Furthermore, we use $\I$ to refer to the set of points on a curve that are associated with information positions of a code, and we use $\H$ to refer to the set of points on a curve that are associated with check positions of a code.

\subsection{Code automorphisms from curve automorphisms}

Let $\mathcal X$ be a curve of genus $g$.  We will see that the code $C(D,G)$ inherits some of the automorphisms of the curve $\mathcal X$, noting that the code may have additional automorphisms that are not associated with automorphisms of $\mathcal X$.

Each automorphism of $\mathcal X$ that fixes $D$ and $G$,  meaning those in 
$$Aut_{D,G}(\mathcal X)=\left\{ \sigma \in Aut(\mathcal X): \sigma(D)=D, \sigma(G)=G\right\},$$ gives rise to a permutation automorphism of the code $C(D,G)$ as follows.
Given $\sigma \in Aut_{D,G}(\mathcal X)$, we may define 
$\overline{\sigma} \in S_n$ by 
$$
\begin{array}{cccc}
\overline{\sigma}: &[n] &\rightarrow &[n] \\
& i & \mapsto & j 
\end{array}
$$ 
if and only if $\sigma(P_i)=P_j$. If $\deg D > 2g-2$, the map
$$
\begin{array}{cccc}
\varphi: &Aut_{D,G}(\mathcal X) & \rightarrow & Aut \left( C(D, G) \right) \\
&\sigma & \mapsto & \overline{\sigma}
\end{array}
$$
is injective \cite[Proposition 8.2.3]{Stichtenoth}, and the identification $ \sigma \leftrightarrow  \overline{\sigma}$ gives
\begin{equation}\label{E:curve_code_autom}
Aut_{D,G}(\mathcal X) \leq Aut \left(C(D, G) \right).
\end{equation}
Wesemeyer \cite{Wesemeyer_98} gave conditions that guarantee equality in Expression (\ref{E:curve_code_autom}) for one-point codes on admissible curves. Recall that $\mathcal X$ is admissible if $g>0$, there are rational functions $x, y \in \F_q\left( \mathcal X \right)$ with pole divisors $(x)_{\infty}=l_1P_{\infty}$ and  $(y)_{\infty}=l_2 P_{\infty}$ for $l_1, l_2 \in Z^+$, and $\mathcal L(\gamma P_{\infty})$ has a basis given by $\left< x^i y^j : il_1+jl_2 \leq \gamma \right>$ for $\gamma \in \Z^+$ \cite[Definition 5.1]{Wesemeyer_98}. 

\begin{lemma} \cite[Theorem 5.12]{Wesemeyer_98} \label{L:Wesemeyer_admissible}
Suppose $\mathcal X$ is an admissible curve  over $\F_q$ Assume that $1\leq l_1<l_2\leq\gamma$. If 
\begin{equation} \label{E:Wesemeyer_condition}
    \deg D > \max \left\{ 2g+2, 2\gamma, l_1 \left( l_2 + \frac{l_1-1}{\beta} \right), l_1 l_2 \left( 1 + \frac{l_1-1}{\gamma-l_1+1} \right)\right\},
\end{equation}
where $\beta=\min\{l_1-1,r:y^r\in\mathcal L(\gamma P_\infty)\}$, then
  $$Aut_{D,\gamma P_{\infty}}(\mathcal X) = Aut \left(C(D, \gamma P_{\infty}) \right).$$
\end{lemma}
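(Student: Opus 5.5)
This statement is quoted from Wesemeyer's paper (\cite[Theorem 5.12]{Wesemeyer_98}), so the natural course in this paper is to cite it rather than reprove it. If a proof were wanted, I would aim to show the inclusion $Aut(C(D,\gamma P_\infty)) \subseteq Aut_{D,\gamma P_\infty}(\mathcal X)$. The reverse inclusion is Expression (\ref{E:curve_code_autom}), which applies because $\deg D > 2g+2 > 2g-2$.

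Fix a permutation $\pi \in Aut(C(D,\gamma P_\infty))$. The plan is to show that $\pi$ is induced by a curve automorphism. Since $\deg D > 2\gamma$, the evaluation map $ev$ is injective on $\mathcal L(2\gamma P_\infty)$. The space $\mathcal L(\gamma P_\infty)$ is sent onto itself, via $ev$, by the linear map $f \mapsto ev^{-1}(\pi(ev(f)))$. Products of two functions in $\mathcal L(\gamma P_\infty)$ lie in $\mathcal L(2\gamma P_\infty)$. Because $\pi$ commutes with coordinatewise multiplication, and evaluation is injective on that larger space, the map extends to a multiplicative map on the algebra generated by $\mathcal L(\gamma P_\infty)$ in degrees up to $2\gamma$.

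The key step is to use admissibility to track the images $\tilde x$ and $\tilde y$ of $x$ and $y$. Here $x \in \mathcal L(l_1P_\infty)$ and $y \in \mathcal L(l_2P_\infty)$, and both lie in $\mathcal L(\gamma P_\infty)$ because $l_1 < l_2 \le \gamma$. One must show that $\tilde x$ and $\tilde y$ again have poles only at $P_\infty$, of orders $l_1$ and $l_2$. They must also satisfy the defining relation of $\mathcal X$. The relation holds because it holds coordinatewise, and $ev$ is injective on the relevant Riemann--Roch space; ensuring this injectivity is exactly the role of bounds like $\deg D > l_1(l_2 + (l_1-1)/\beta)$ and $l_1l_2(1+\cdots)$. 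The pole-order claim would come from comparing dimension filtrations $\mathcal L(mP_\infty)$, $m\le\gamma$, which are characterized intrinsically by the multiplicative structure (which monomials $x^iy^j$ are available). Then $x\mapsto \tilde x$, $y \mapsto \tilde y$ defines a field automorphism $\sigma$ of $\F_q(\mathcal X)$. This $\sigma$ fixes $P_\infty$ and permutes the $P_i$ compatibly with $\pi$, so $\pi = \overline{\sigma}$.

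The main obstacle is showing that $\tilde x$ and $\tilde y$ have the correct pole orders solely at $P_\infty$, i.e., that the code automorphism respects the Weierstrass filtration. This is where the technical degree conditions involving $\beta$ enter, and I would follow Wesemeyer's counting argument there rather than reproduce it.
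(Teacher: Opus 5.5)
Your main recommendation matches the paper: it gives no proof of this lemma and simply quotes \cite[Theorem 5.12]{Wesemeyer_98}, so citing it is exactly the intended route. One caution on your optional sketch: the ``main obstacle'' is mislocated, since having poles only at $P_\infty$ is automatic for $\tilde x,\tilde y\in\mathcal L(\gamma P_\infty)$, and exact pole orders are not needed in advance. Indeed, $\deg D>2\gamma$ already gives $\tilde x^i\tilde y^j=\widetilde{x^iy^j}\in\mathcal L(\gamma P_\infty)$ for the basis monomials, hence pole-order bounds $\frac{l_1\gamma}{\gamma-l_1+1}$ for $\tilde x$ and $l_2+\frac{l_1-1}{\beta}$ for $\tilde y$; $l_2$ resp.\ $l_1$ times these bounds are precisely the last two terms of the maximum, and this is what forces the defining relation of $\mathcal X$ evaluated at $(\tilde x,\tilde y)$, which vanishes at every $P_i$, to be identically zero.
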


We note that Hermitian curves and more generally norm-trace curves are admissible, so Lemma \ref{L:Wesemeyer_admissible} applies to the associated one-point codes for which Equation (\ref{E:Wesemeyer_condition}) holds. The case for Hermitian curves is summarized in the following result.

\begin{lemma} \cite[Theorem 5.15, Corollary 5.16]{Wesemeyer_98}
Let $\mathcal X_q$ be the Hermitian curve over $\F_{q^2}$ and $D=\sum_{P\neq P_\infty}P$. Assume $q>2$. If 
$$q+1\leq\gamma\leq\frac{q^3-1}{2} \quad \text{ or } \quad \frac{q^3+1}{2}+q^2-q-2\leq\gamma\leq q^3+q^2-2q-3,$$
then
$$Aut(C(D,\gamma P_\infty))=Aut_{D,\gamma P_\infty}(\mathcal X_q).$$
\end{lemma}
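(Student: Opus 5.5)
This statement is a specialization of Lemma \ref{L:Wesemeyer_admissible}, so the plan is to verify that the Hermitian curve with these parameters satisfies its hypotheses.

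\textbf{Admissibility and the constants.} On $\mathcal X_q$ the functions $x,y$ have pole divisors $(x)_\infty=qP_\infty$ and $(y)_\infty=(q+1)P_\infty$. The monomials $x^iy^j$ with $j\le q-1$ and $iq+j(q+1)\le\gamma$ form a basis of $\mathcal L(\gamma P_\infty)$. Hence $\mathcal X_q$ is admissible with $l_1=q$ and $l_2=q+1$. The hypothesis $1\le l_1<l_2\le\gamma$ is exactly $\gamma\ge q+1$. Since $y\in\mathcal L(\gamma P_\infty)$ once $\gamma\ge q+1$, we get $\beta=\min\{q-1,1\}=1$.

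\textbf{Checking Condition (\ref{E:Wesemeyer_condition}).} Here $\deg D=q^3$ and $g=q(q-1)/2$. We need $q^3$ to exceed each of the following:
\begin{itemize}
\item $2g+2=q^2-q+2$, which holds for $q\ge2$;
\item $l_1(l_2+(l_1-1)/\beta)=q(2q)=2q^2$, which holds for $q>2$;
\item $l_1l_2\bigl(1+\tfrac{q-1}{\gamma-q+1}\bigr)\le q(q+1)\cdot\tfrac{q}{2}$, using $\gamma-q+1\ge2$; this is below $q^3$ for $q>1$;
\item $2\gamma$, which is exactly the condition $\gamma\le (q^3-1)/2$ of the first range.
\end{itemize}
So Lemma \ref{L:Wesemeyer_admissible} gives the result on the first range.

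\textbf{Second range by duality.} Condition (\ref{E:Wesemeyer_condition}) fails for large $\gamma$, so I would use duality instead. On the Hermitian curve, $C(D,\gamma P_\infty)^\perp=C(D,(q^3+q^2-q-2-\gamma)P_\infty)$, because the differential $dx/(x^{q^2}-x)$ has divisor $(q^3+q^2-q-2)P_\infty-D$ and residue $1$ at each $P\in D$. A code and its dual have the same permutation automorphism group, and $Aut_{D,\gamma P_\infty}(\mathcal X_q)$ is the stabilizer of $P_\infty$, independent of $\gamma$. It therefore suffices that $\gamma^\perp:=q^3+q^2-q-2-\gamma$ lies in the first range. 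The upper bound $\gamma\le q^3+q^2-2q-3$ gives $\gamma^\perp\ge q+1$. The lower bound $\gamma\ge\frac{q^3+1}{2}+q^2-q-2$ gives $\gamma^\perp\le\frac{q^3-1}{2}$. Both endpoints match the stated range.

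\textbf{Main obstacle.} The delicate step is the exact residue computation justifying $C^\perp=C(D,(q^3+q^2-q-2-\gamma)P_\infty)$ with no twisting vector. I would verify it via $(x^{q^2}-x)=D-q^3P_\infty$ and $(dx)=(2g-2)P_\infty$.
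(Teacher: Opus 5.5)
Your proposal is correct and follows the intended route. The paper gives no proof beyond citing Wesemeyer and noting that Hermitian curves are admissible so that Lemma \ref{L:Wesemeyer_admissible} applies, which is your first-range check, and the second range is exactly the image of the first under the duality $\gamma\mapsto q^3+q^2-q-2-\gamma$.

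Two slips do not affect the conclusion but should be fixed:
\begin{itemize}
\item In the fourth term, $1+\frac{q-1}{2}=\frac{q+1}{2}$, not $\frac{q}{2}$, so the bound is $\frac{q(q+1)^2}{2}$. This is below $q^3$ only for $q\ge 3$, so the hypothesis $q>2$ is used here as well as in the $2q^2$ term. In that $2q^2$ term, only $\beta\ge 1$ matters, however the definition of $\beta$ is read.
\item The residue of $dx/(x^{q^2}-x)$ at each affine point is $-1$, not $1$. Since the residue is the same constant at every point of $D$, the dual code $C(D,(q^3+q^2-q-2-\gamma)P_\infty)$ is unchanged.
\end{itemize}
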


A similar result holds for more general algebraic geometry codes of modest dimension, according to the following result. 

\begin{lemma} \cite[Corollary 3]{Joyner_Ksir_06} Suppose $\mathcal X$ is a curve of genus $g \geq 2$ with divisors $D$ and $G$ having disjoint support. If $\deg D \geq (1+g) \deg G$ and $\deg G \geq 2g+1$, then 
$$Aut_{D,G}(\mathcal X) = Aut \left(C(D, G) \right).$$
\end{lemma}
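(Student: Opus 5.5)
The plan is to view a code automorphism as a projective linear transformation of the canonical model of $\mathcal X$ given by $\mathcal L(G)$. I would then show that this transformation must preserve the embedded curve, so that it comes from a curve automorphism.

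First, since $\deg G \geq 2g+1$, the complete linear system $|G|$ is very ample. Writing $k=\ell(G)=\deg G+1-g$, it gives an embedding
$$\phi_G:\mathcal X\to\mathbb P^{k-1}$$
of degree $\deg G$. Since $\deg G<n$, we have $\ell(G-D)=0$, so $ev:\mathcal L(G)\to C(D,G)$ is an isomorphism. The columns of a generator matrix built from a basis $f_1,\dots,f_k$ of $\mathcal L(G)$ are then exactly the coordinate vectors of the points $\phi_G(P_1),\dots,\phi_G(P_n)$, each up to a scalar.

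Now let $\overline\sigma\in Aut(C(D,G))$. Because $ev$ is bijective, $\overline\sigma$ transports to a linear automorphism $A$ of $\mathcal L(G)$ with $ev(Af)=\overline\sigma(ev(f))$. Dually, $A$ gives a projective linear map $\widehat A$ of $\mathbb P^{k-1}$ with $\widehat A(\phi_G(P_i))=\phi_G(P_{\overline\sigma(i)})$. In particular, $\widehat A$ permutes the $n$ points $\phi_G(P_i)$.

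The key step is to show that $\widehat A(\phi_G(\mathcal X))=\phi_G(\mathcal X)$.
\begin{enumerate}
\item Let $F$ be a form of degree $m$ vanishing on $\phi_G(\mathcal X)$.
\item The pullback $(F\circ\widehat A)\circ\phi_G$ is a section of $\mathcal L(mG)$, that is, an element of $\mathcal L(mG)$ after dehomogenizing.
\item This section vanishes at all of $P_1,\dots,P_n$, because $\widehat A$ permutes the points $\phi_G(P_i)\in\phi_G(\mathcal X)$.
\item Hence if $m\deg G<n$, the section lies in $\mathcal L(mG-D)=0$, so $F\circ\widehat A$ also vanishes on $\phi_G(\mathcal X)$.
\item It therefore suffices that $\phi_G(\mathcal X)$ be cut out set-theoretically by forms of degree $m$ with $m\deg G<\deg D$.
\end{enumerate}

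This is where the hypothesis $\deg D\geq(1+g)\deg G$ should enter, and I expect it to be the main obstacle. I would first try the classical results of Castelnuovo, Mumford, Saint-Donat and Fujita. For $\deg G\ge 2g+1$ the curve is projectively normal and its ideal is generated in degrees at most $3$; indeed only quadrics are needed once $\deg G\ge 2g+2$. Since $g\geq 2$ gives $n\ge (1+g)\deg G\ge 3\deg G$, only the boundary case $m=3$, $n=3\deg G$ would need separate care.

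If that route is awkward, I would fall back on a more elementary bound that uses only Riemann--Roch. The idea is to compare the Hilbert functions of $\phi_G(\mathcal X)$ and $\widehat A(\phi_G(\mathcal X))$ in degrees up to $g+1$. Forms of those degrees restrict surjectively to $\mathcal L(mG)$ by projective normality, which would reproduce exactly the factor $1+g$.

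Once $\widehat A$ preserves $\phi_G(\mathcal X)$, it restricts to an automorphism $\tau$ of $\mathcal X$ with $\tau(P_i)=P_{\overline\sigma(i)}$, so $\tau(D)=D$. Moreover, $\tau$ preserves the hyperplane class, so $\tau^*G\sim G$. Adjusting $A$ by the scalar and the function realizing this linear equivalence identifies $\overline\sigma$ with $\overline\tau$. The one point I would check is whether this gives $\tau(G)=G$ exactly, or only up to linear equivalence, which is the sense in which Joyner--Ksir phrase $Aut_{D,G}$. Combined with the injectivity behind Expression (\ref{E:curve_code_autom}), this gives the claimed equality.
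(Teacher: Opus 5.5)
The paper gives no argument of its own here; it only cites Joyner--Ksir. So your outline has to close by itself. The architecture is right and steps 1--5 are correct. Combined with the facts you quote ($X:=\phi_G(\mathcal X)$ is cut out by quadrics if $\deg G\ge 2g+2$, and by quadrics and cubics if $\deg G=2g+1$), they give $\widehat A(X)=X$ in every case except $g=2$, $\deg G=5$, $\deg D=15$.

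\textbf{First gap: the boundary case.} When $g=2$, $\deg G=5$, $\deg D=15$, we have $3\deg G=\deg D$, so $\mathcal L(3G-D)\neq 0$ whenever $D\sim 3G$, and you leave this case open. Your Hilbert-function fallback does not rescue it: equal Hilbert functions do not identify two curves, and running step 4 in degree $g+1$ again needs the strict inequality $(g+1)\deg G<\deg D$. The case can be closed as follows.
\begin{enumerate}
\item $X$ is a genus-$2$ quintic in $\mathbb P^3$. Since $\dim\mathcal L(2G)=9$ and $X$ is projectively normal, $X$ lies on a unique quadric $Q$, which is irreducible because $X$ is nondegenerate.
\item Your argument in degree $2$ (where $10<15$) gives $Q\supseteq X':=\widehat A(X)$.
\item The linear map $F\mapsto (F\circ\widehat A)\circ\phi_G\in\mathcal L(3G-D)$ on the $6$-dimensional space of cubics through $X$ has target of dimension at most $1$. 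Its kernel, the cubics through both $X$ and $X'$, therefore has dimension at least $5>4=\dim Q\cdot(\text{linear forms})$, so it contains a cubic $F$ not divisible by $Q$.
\item Then $Q\cap\{F=0\}$ is a degree-$6$ curve containing $X$, hence $X$ plus a line $L$. Since $X'\subseteq X\cup L$ and $X'$ is an irreducible quintic, $X'=X$.
\end{enumerate}

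\textbf{Second gap: $\tau(G)=G$ exactly.} This is the point you leave unchecked at the end, and it is not cosmetic, because $Aut_{D,G}(\mathcal X)$ requires $\tau(G)=G$ as divisors.
\begin{itemize}
\item Arrange conventions so that $\tau(P_j)=P_{\overline\sigma(j)}$. The linear equivalence then reads $A^{-1}f=c\,(f\circ\tau)$ for all $f\in\mathcal L(G)$, with one rational function $c$.
\item Since $|G|$ is base-point-free, $\mathcal L(G)=c\,\mathcal L(\tau^*G)$ forces $(c)=\tau^*G-G$.
\item Evaluating at the $P_j$ gives $c(P_j)=1$ for all $j$. This is exactly where you use that $\overline\sigma$ is a permutation rather than a monomial automorphism.
\item If $G\ge 0$, then $c\in\mathcal L(G-\tau^*G)\subseteq\mathcal L(G)$, so $c-1\in\mathcal L(G-D)=0$. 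Hence $\tau^*G=G$ and the proof closes. This covers $G=\gamma P_\infty$, the only case the paper uses.
\end{itemize}

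For non-effective $G$, the statement as written can fail, so no argument can close this step. Take $u$ with $u(P_j)=1$ for all $j$. Then $\mathcal L(G+(u))=u^{-1}\mathcal L(G)$ gives $C(D,G+(u))=C(D,G)$. For $\tau\in Aut_{D,G}(\mathcal X)$, however, $\tau(G+(u))=G+(u\circ\tau^{-1})$, which differs from $G+(u)$ unless $u\circ\tau^{-1}=u$. Whenever $\tau\neq\mathrm{id}$, one can choose $u=1+v$, with $v$ vanishing on $\operatorname{supp}D$, so that $u\circ\tau^{-1}\neq u$. Since $\deg D>2g+2$, no other curve automorphism induces $\overline\tau$.

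So you should add the hypothesis $G\ge 0$. Retreating to $\tau(G)\sim G$ is not a fix, because such $\tau$ in general induce only monomial automorphisms of $C(D,G)$.
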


For permutation decoding, we do not necessarily require the entire automorphism group of the code. In the following sections, we will see that Expression \ref{E:curve_code_autom} provides $r$-PD sets for the sets of burst errors of interest. 

\section{Hermitian codes} \label{S:herm}

\subsection{Automorphisms}\label{S:herm:aut}

Automorphisms of the Hermitian curve were first studied by Stichtenoth in 1973 \cite{Stichtenoth_1973}. 
The following convenient matrix representation is taken from \cite[Lemma 2]{triples}. 

\begin{lemma}\label{L:Herm_autom}
Let $(a:b:e)$ and $(c:d:f)$ be any two distinct points of $ \mathcal X_q(\F_{q^2})$ and $\epsilon$ an element of $\F_{q^2}^*$.
Then there is an automorphism of $\mathcal X_q$ induced by the linear mapping on $\mathbb P^2$ defined by left multiplication by the matrix
\[
  M = \begin{bmatrix}
\epsilon(ed - b f)^q & \epsilon^{q+1} \xi c & a \\
\epsilon(a d - b c)^q & \epsilon^{q+1} \xi d & b \\
\epsilon(ec  - a f)^q & \epsilon^{q+1} \xi f & e 
\end{bmatrix} ,
\]
where $\xi = -c^q a + d^q e + f^q b$.
Moreover, every element of $Aut(\mathcal X_q)$ can be written in this form for some choice of $\epsilon$ and the two points.
\end{lemma}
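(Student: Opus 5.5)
Two things have to be shown: first, that $M$ induces an automorphism of $\mathcal X_q$; second, that every automorphism arises this way.

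For the first part we work with the homogenized Hermitian form. Write the curve as $Y^qZ+YZ^q=X^{q+1}$. Up to sign and scaling of coordinates, this is the zero set of the Hermitian form
$$H(v)=v^{(q)}\,J\,v^T=Y^qZ+Z^qY-X^{q+1},$$
where $v=(X,Y,Z)$, $v^{(q)}$ applies Frobenius coordinatewise, and $J$ is the appropriate symmetric matrix with entries in $\F_q$. A linear map $v\mapsto Mv$ preserves the curve exactly when $(M^{(q)})^TJM=\lambda J$ for some $\lambda\in\F_q^*$. That is, $M$ must lie in the unitary similitude group $GU(3,q^2)$, and then it induces an element of $PGU(3,q^2)$.

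So the plan is to verify this identity entrywise for the given $M$. Its columns are built as follows:
\begin{itemize}
\item the third column is the point $u=(a,b,e)$;
\item the second column is $\epsilon^{q+1}\xi\,w$, where $w=(c,d,f)$;
\item the first column is $\epsilon$ times the Frobenius-conjugate cross-type vector $(ed-bf,\,ad-bc,\,ec-af)^q$.
\end{itemize}
We then use three facts:
\begin{itemize}
\item isotropy of $u$ and $w$, namely $H(u)=H(w)=0$, since both points lie on the curve;
\item the value of the pairing $\langle u,w\rangle$, which is the quantity $\xi$ up to sign;
\item orthogonality of the cross-type vector to both $u$ and $w$ under the form, together with its norm.
\end{itemize}
By construction the cross-type vector is the $J$-orthogonal complement of $\operatorname{span}(u,w)$. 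Its self-pairing evaluates to $\xi^{q+1}$, up to sign, via a Lagrange-type identity. With these the scaling factor becomes $\lambda=\epsilon^{q+1}\xi^{q+1}$ (or a sign variant), which lies in $\F_q^*$. Distinctness of the two points is what guarantees $\xi\neq0$: two distinct isotropic points of a nondegenerate Hermitian form are never orthogonal. This also gives invertibility of $M$. Finally, $M$ preserves $\F_{q^2}$-rationality, so it permutes the rational points.

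For the converse we invoke the classical fact, due to Stichtenoth \cite{Stichtenoth_1973}, that $Aut(\mathcal X_q)\cong PGU(3,q^2)$ and that it acts on the $q^3+1$ rational points.

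It then suffices to count. Let $g\in PGU(3,q^2)$ be given, and set $u=g(0:0:1)$ and $w=g(0:1:0)$. These are distinct rational points. Note that $(0:0:1)$ and $(0:1:0)=P_\infty$ are exactly the images of the standard basis vectors $e_3$ and $e_2$ under $M$ when $u=(0,0,1)$ and $w=(0,1,0)$.

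Consider the pointwise stabilizer of these two points in $PGU(3,q^2)$. It consists of diagonal unitary similitudes $\operatorname{diag}(\mu,\nu,\rho)$ subject to the conditions $\nu^q\rho=\rho^q\nu$ and $\mu^{q+1}=\lambda$, taken modulo scalars. This group has order $q^2-1$, and it is parametrized by $\epsilon\in\F_{q^2}^*$, matching the family $M$ with fixed $u,w$.

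The total count is then $(q^3+1)q^3(q^2-1)=|PGU(3,q^2)|$. It remains to check that distinct parameter choices give distinct projective maps, which follows since $u,w$ are recovered as $g(e_3),g(e_2)$ and $\epsilon$ is then determined up to the stabilizer. With that, the map from parameters to automorphisms is a bijection.

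The main obstacle is the first part: the explicit verification that $M^{(q)T}JM=\lambda J$, in particular the Lagrange-type identity for the norm of the first column. I would do this by working abstractly with the sesquilinear form and a Hermitian cross product, rather than expanding coordinates.
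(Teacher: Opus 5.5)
The paper gives no proof of this lemma. It is quoted from \cite[Lemma 2]{triples}, and its completeness claim ultimately rests on Stichtenoth's determination of $Aut(\mathcal X_q)$ \cite{Stichtenoth_1973}. Your argument is therefore a self-contained substitute rather than a reconstruction, and it is correct. Over the citation, it explains the shape of $M$ and shows exactly where distinctness enters, namely $\xi\neq 0$.

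You can drop the hedges ``up to sign''. Take $J_{11}=-1$, $J_{23}=J_{32}=1$ and $\langle x,y\rangle=x^{(q)}Jy^{T}$. Then:
\begin{itemize}
\item $\xi=\langle w,u\rangle$ exactly, so $\langle u,w\rangle=\xi^{q}$;
\item the first column of $M$ is $\epsilon z$ with $z=Ju^{(q)}\times Jw^{(q)}$;
\item the Binet--Cauchy identity gives $\langle z,z\rangle=\langle u,u\rangle\langle w,w\rangle-\langle w,u\rangle\langle u,w\rangle=-\xi^{q+1}$, so $M^{(q)T}JM=(\epsilon\xi)^{q+1}J$.
\end{itemize}
The order of arguments genuinely matters. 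The $(2,3)$ entry is $(\epsilon^{q+1}\xi)^{q}\langle w,u\rangle$, and it equals $(\epsilon\xi)^{q+1}$ only because column two is scaled by $\langle w,u\rangle$ and not by its conjugate $\xi^{q}$.

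For the converse, your count is valid, since you take $|PGU(3,q^2)|$ from Stichtenoth. Note, though, that it reverses the paper's logic: right after the lemma, the paper reads $|Aut(\mathcal X_q)|=(q^3+1)q^3(q^2-1)$ off this very parametrization.

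There is a route that needs only that automorphisms are projective unitary similitudes, and it also makes your injectivity remark precise. Write $M_{u,w,\epsilon}$ for the matrix of the lemma. The column structure gives $M_{u,w,\epsilon}=M_{u,w,1}\,\mathrm{diag}(\epsilon,\epsilon^{q+1},1)$. Now take an automorphism $g$, and let $u,w$ represent $g(0:0:1)$ and $g(0:1:0)$. Then $M_{u,w,1}^{-1}g$ fixes $(0:0:1)$ and $(0:1:0)$. By your stabilizer computation it is projectively $\mathrm{diag}(\mu,\mu^{q+1},1)$, so $g=M_{u,w,\mu}$. Finally, $\mathrm{diag}(\mu,\mu^{q+1},1)$ is scalar only for $\mu=1$, so distinct $\epsilon$ give distinct maps.
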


Hence, 
$$Aut \left( \mathcal X_q \right) = \left\{ \varphi_{a,b,e,c,d,f,\epsilon} : (a:b:e), (c:d:f) \in \mathcal X_q(\F_{q^2}), \epsilon \in \F_{q^2}^* \right\}$$ where 
$$\varphi_{a,b,e,c,d,f,\epsilon}:
\begin{cases} 
x \mapsto \epsilon (ed-bf)^q x + \epsilon^{q+1} \xi c y+ az \\
y \mapsto \epsilon (ad-bc)^q x + \epsilon^{q+1} \xi d y + bz \\
z \mapsto \epsilon (ec-af)^q x + \epsilon^{q+1} \xi f y + ez. \\
\end{cases}$$
Note that $$\varphi_{a,b,e,c,d,f,\epsilon}(x:y:z)=M(x:y:z)^T.$$ We have  \[\varphi_{a,b,1,c,d,1,\epsilon}(P_{00})=P_{ab} \quad \text{ and } \quad \varphi_{a,b,1,c,d,1,\epsilon}(P_{\infty})=P_{cd}.\] One can also verify that \[\varphi_{a,b,e,c,d,f,\epsilon}(P_{00})=(a:b:e) \quad \text{ and } \quad \varphi_{a,b,e,c,d,f,\epsilon}(P_{\infty})=(c:d:f).\] 
Thus, given any two points $(a:b:e)$ and $(c:d:f)$ on $\mathcal X_q$, there are $q^2-1$ automorphisms $\varphi_{a,b,e,c,d,f,\epsilon}$ of $\mathcal X_q$ such that $\varphi_{a,b,e,c,d,f,\epsilon}(P_{00})=(a:b:e)$ and $\varphi_{a,b,e,c,d,f,\epsilon}(P_{\infty})=(c:d:f)$. From this description, we can conclude
that $|Aut \left( \mathcal X_q \right)|=(q^3+1)q^3(q^2-1)$. 

To determine $\Gamma:=\left\{ \sigma \in Aut \left( \mathcal X_q \right) : \sigma(P_{\infty})=P_{\infty} \right\}$, consider those $\varphi_{a,b,e,c,d,f,\epsilon}$ such that $\varphi_{a,b,e,c,d,f,\epsilon}(P_{\infty})=P_{\infty}$. Recall that
$\varphi_{a,b,e,c,d,f,\epsilon}(P_{\infty})=(c:d:f)$, so $(c:d:f)=(0:1:0)$ which forces $c=f=0$. The only such maps are 
$$\varphi_{a,b,e,0,d,0,\epsilon}:
\begin{cases} 
x \mapsto \epsilon (ed)^q x +  az \\
y \mapsto \epsilon (ad)^q x + \epsilon^{q+1} \xi d y + bz \\
z \mapsto ez \\
\end{cases}$$
where $\xi=d^qe.$ We can further simplify the description by noting that we may assume $d=e=1$. This gives 
\begin{equation}\label{E:phi_a_b_ep} \varphi_{a,b,1,0,1,0,\epsilon}:
\begin{cases} 
x \mapsto \epsilon x +  az \\
y \mapsto \epsilon  a^q x + \epsilon^{q+1}  y + bz \\
z \mapsto z, \\
\end{cases} \end{equation}
or equivalently, the linear map defined by left multiplication of the matrix
\[
  M = \begin{bmatrix}
\epsilon & 0 & a \\
\epsilon a^q & \epsilon^{q+1} & b \\
0 & 0 & 1
\end{bmatrix} ,
\]

We note that $| \Gamma | = q^3 (q^2-1)$ given that each $P_{ab}$ and $\epsilon \in \F_{q^2}^*$ gives rise to such a map.

Automorphisms of Hermitian codes $C(D,\gamma P_{\infty})$ were first studied by 
Xing in 1995 \cite{Xing_95}.
Indeed,
$$Aut_{D,G}(\mathcal X_q)=\left\{ \varphi_{a,b,\epsilon}: P_{ab} \in \mathcal X_q(\F_{q^2}), \epsilon \in \F_{q^2}^* \right\}$$ where
$
\varphi_{a,b,\epsilon}:=
\varphi_{a,b,1,0,1,0,\epsilon}$ as in Expression \ref{E:phi_a_b_ep} so that 
$$
\begin{array}{lccl}
\varphi_{a,b,\epsilon}: &x &\mapsto& \epsilon x + a \\
&y &\mapsto& 
\epsilon  a^q x + \epsilon^{q+1}   y + b 
\end{array}
$$
For maps of the form $\varphi_{a,b,1}$, we often write $\varphi_{a,b}$.

\subsection{Systematic Representation}\label{S:herm:systrep}

In this section, we see that automorphisms of the Hermitian code $C(D,\gamma P_{\infty})$ with large order partition the affine points of $\mathcal X_q$ into a small number of orbits. These orbits provide a description of a systematic form generator matrix for $C(D,\gamma P_{\infty})$.

Consider the automorphism
$$\sigma:
\begin{cases} x \mapsto \zeta x \\ y \mapsto \zeta^{q+1}y
\end{cases}$$
where $\F_{q^2}^*= \left< \zeta \right>$; that is, $\sigma=\varphi_{0,0,\zeta}$. Then there are $q+2$ orbits of the affine points on $\mathcal X_q$:
\begin{itemize}
\item $q$ orbits of the form
$
O_i:= \left\{ P_{\zeta b_{1}}, P_{\zeta^2 b_2}, P_{\zeta^3 b_3},  \dots,  P_{1,b_{q^2-1}} \right\},
$
\item
$O_{q+1}:=\left\{ P_{0b} : b \neq 0 \right\}$, and
\item
$O_{q+2}:=\left\{ P_{00} \right\}$.
\end{itemize}
In particular, if we fix $b \in \F_{q^2}$ so that $(\zeta:b) \in \mathcal X_q(\F_{q^2})$ and write
$\ker (Tr)=\left\{ \beta_1, \dots, \beta_q \right\}$ with $\beta_1=0$, then 
for $i \in [q]$, 
\begin{equation}
O_i=\left\{ (\zeta^t, \zeta^{(t-1)q+t}b+\zeta^{(t-1)(q+1)}\beta_i): t \in [q^2-1] \right\}\label{orbit order}
\end{equation}
and 
$$
O_{q+1}=\left\{ (0,\beta_i): i \in [q] \right\}.
$$
Observe that if $(\zeta^t,\zeta^\ell)\in O_j$, then $$\sigma^n(\zeta^t,\zeta^\ell)=(\zeta^{t+n},\zeta^{n(q+1)+\ell})=(\zeta^{t+n},\zeta^\ell)\in O_j$$ for all $n\in\{k(q-1):k\in[q+1]\}$. Since $|\{k(q-1):k\in[q+1]\}|=q+1=|\mathcal Q_{\zeta^\ell}|$, it follows that $\mathcal Q_{\zeta^\ell}\subseteq O_j$ for some $j\in[q]$. Thus, for any $b\in\F_{q^2}$, all points in $\mathcal Q_b$ belong to the same orbit of $\sigma$.

The orbits of $\sigma$ yield a description for systematic generator matrices for one-point Hermitian codes $C(D,\gamma P_{\infty})$ as given in \cite{Little_Saints_Heegard_97}; see also \cite{Heegard_Little_Saints_95}.

\begin{proposition}\label{prop:systmetic}
    \cite[Theorems 3.3 and 3.4]{Little_Saints_Heegard_97}\label{info_positions}
 Let $C(D,\gamma P_\infty)$ be a Hermitian code with $O_1,...,O_q$ specified as in Expression (\ref{orbit order}). If $\gamma=(i+1)(q^2-1)-q+j$ with $0 \leq j \leq q-1$,
 then one can take
\begin{enumerate}
\item as information positions all points in $O_1, \dots, O_i$ and
the first $\frac{q(q-1)}{2}+j$ points in $O_{i+1}$; and \item as check
positions all points in $O_{i+2}, \dots, O_q, O_{q+1}, O_{q+2}$
and $q^2-1-\left(\frac{q(q-1)}{2}+j \right)$ points in $O_{i+1}$.
\end{enumerate}
If $\gamma=i(q^2-1)+tq+r$ with $0 \leq r \leq q-1$, then one can take
\begin{enumerate}
\item as information positions all points in $O_1, \dots,
O_{i-1}$, the first $t+\dots+q-1+r-(t+1)$ points in $O_i$, and the first
$1+\dots+q-t+r$ points in $O_{i+1}$; and \item as check positions
all points in $O_{i+2}, \dots, O_q, O_{q+1}, O_{q+2}$, the remaining
\\$q^2-1-\left(t+\dots+q-1+r-(t+1) \right)$ points in $O_{i}$, and the remaining \\
$q^2-1-\left( 1+\dots+q-t+r \right)$ points in $O_{i+1}$.
\end{enumerate}
\end{proposition}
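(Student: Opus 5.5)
The plan is to show that the designated set $\I$ of information points has size $k=\dim C(D,\gamma P_\infty)$ and that the evaluation map restricted to $\I$ is injective on $\mathcal L(\gamma P_\infty)$; this is equivalent to the corresponding columns of a generator matrix being linearly independent. Since $\deg D=q^3>\gamma$ in the ranges of interest, we have $k=\ell(\gamma P_\infty)=\#\{(m,n): 0\le m,\ 0\le n\le q-1,\ mq+n(q+1)\le\gamma\}$. Counting lattice points column by column ($n$ fixed) should reproduce the stated counts $i(q^2-1)+\frac{q(q-1)}{2}+j$ (respectively the analogous expression in the second case), once $\gamma$ is written in the form $(i+1)(q^2-1)-q+j$ or $i(q^2-1)+tq+r$.

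For injectivity, I will exploit the orbit structure of $\sigma=\varphi_{0,0,\zeta}$. The basis monomials $x^my^n$ are eigenvectors of $\sigma$: indeed $\sigma^*(x^my^n)=\zeta^{m+n(q+1)}x^my^n$. Grouping monomials by their character modulo $q^2-1$ decomposes $\mathcal L(\gamma P_\infty)$ into $\sigma$-isotypic pieces. On a single orbit $O_i$, parametrized by $t\in[q^2-1]$ as in Expression (\ref{orbit order}), evaluation of a function becomes a function of $t$. The monomial $x^my^n$ then restricts to a polynomial in $\zeta^t$ and the power $\zeta^{t(q+1)}$ twisted by $\beta_i$. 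The result is a discrete Fourier/Vandermonde-type system in the variable $u=\zeta^t$. A function vanishing on the chosen first points of $O_1,\dots,O_i$ and on the initial segment of $O_{i+1}$ will then be shown to vanish by a degree count: restricted to the line $x=u$ it is a univariate polynomial in $u$, and on each full orbit it vanishes at all $q^2-1$ nonzero $u$.

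The inductive step runs as follows. First use the full orbits $O_1,\dots,O_i$: each orbit $O_i$ corresponds to fixing the "$\beta_i$-branch" of $y$ over each $x$, so vanishing on $i$ full orbits means $f$ vanishes on $i$ of the $q$ points of each fibre $\mathcal P_a$ for $a\ne0$. Writing $f=\sum_{n<q} f_n(x)y^n$ and using a Vandermonde argument in the $\beta$'s, this forces divisibility of $f$ by $\prod_{s\le i}(y-y_s(x))$ modulo $x^{q^2-1}-1$. After that the remaining partial orbit imposes the last conditions. Finally, reduce the $x$-degree modulo $x^{q^2-1}-1$ using the pole-order bound $\gamma$, and compare with the number of points of $O_{i+1}$ taken, namely $\frac{q(q-1)}{2}+j$.

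The main obstacle is this final bookkeeping: the branches $y_s(x)$ are not polynomial in $x$, so the factorization must be done on the orbit parametrization rather than on the curve. There the lattice-point count for $\ell(\gamma P_\infty)$ has to match exactly the consecutive-root (BCH-type) bound in $t$. I would first attempt the first case, $\gamma=(i+1)(q^2-1)-q+j$, where only one orbit is partial. The second case, with two partial orbits $O_i$ and $O_{i+1}$, would then follow by a refinement of the same count. Since this is the content of \cite[Theorems 3.3 and 3.4]{Little_Saints_Heegard_97}, in the paper it suffices to cite it, and the argument above is only a verification sketch.
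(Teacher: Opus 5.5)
The paper gives no proof of this proposition; it quotes Little--Saints--Heegard, so your closing sentence matches what the paper does. Read as a verification, however, your sketch has a genuine gap. The step you call ``the main obstacle'' is the entire content of the result and is left undone, and the second case is only asserted to follow ``by a refinement''. The obstacle is also misdiagnosed: the branches \emph{are} polynomial. Since $\sigma(x,y)=(\zeta x,\zeta^{q+1}y)$, each orbit is $O_\ell=\{(u,c_\ell u^{q+1}) : u\in\F_{q^2}^*\}$ with $c_\ell^q+c_\ell=1$, i.e.\ the fibre $z=c_\ell$ of the $\sigma$-invariant function $z:=y/x^{q+1}$. Moreover, a degree count in $u$ cannot do the job, because the exponents $m+n(q+1)$ exceed $q^2-1$ and wrap around. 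Divisibility by $\prod_{s\le i}(y-c_sx^{q+1})$ modulo $x^{q^2-1}-1$ also does not keep track of the pole-order constraint that has to match the number of points used. Your dimension count itself is fine.

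What is missing is an eigenspace-by-eigenspace argument. The $\sigma$-eigenvalue of $x^my^n$ depends only on the residue $\rho$ of $w=mq+n(q+1)$ modulo $q^2-1$; its frequency is $m+n(q+1)\equiv q\rho$. Using $y=zx^{q+1}$ and $x^{q^2-1}=(1-z)z^{-q}$ on the curve, each eigenspace of $\mathcal L(\gamma P_\infty)$ has the form $x^e z^{s_0}\F_{q^2}[z]_{\le d}$, i.e.\ consecutive powers of $z$ with $s_0$ possibly negative. Its dimension is $d+1=\#\{w\in H: w\le\gamma,\ w\equiv\rho\}$, where $H=\langle q,q+1\rangle$, and on $O_\ell$ it restricts to $u\mapsto c_\ell^{s_0}G(c_\ell)u^e$. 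Two Vandermonde facts then finish the argument. First, an eigencomponent vanishing on $d+1$ full orbits is zero, since the $c_\ell$ are distinct and nonzero. Second, a combination of at most $N$ distinct frequencies vanishing at $u=\zeta,\dots,\zeta^N$ is zero.

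In the first case every eigenspace has dimension $i$ or $i+1$, the latter exactly for $\rho\in H\cap[0,2g-1+j]$, a set of $g+j=\frac{q(q-1)}{2}+j$ residues. After vanishing on $O_1,\dots,O_i$, one scalar per such class survives, and the first $\frac{q(q-1)}{2}+j$ points of $O_{i+1}$ kill them.

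In the second case the same bookkeeping gives the information set consisting of $O_1\cup\dots\cup O_{i-1}$, the first $q^2-1-A_2$ points of $O_i$, and the first $A_1$ points of $O_{i+1}$. Here $A_1=|H\cap[0,tq+r]|$ and $A_2$ is the number of gaps of $H$ exceeding $tq+r$. This does \emph{not} reproduce the counts as printed. For $q=3$, $\gamma=8$ ($i=1$, $t=r=0$, a value listed in the paper's table) one has $k=6$. The printed expressions, however, already put $1+2+3=6$ points in $O_2$, plus a positive number in $O_1$. So your plan to recover ``the analogous expression in the second case'' would fail: carrying out the count exposes a misquotation in the statement rather than verifying it.
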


As we make use of the organization of rational points provided in Proposition \ref{info_positions}, we include the following examples to illustrate it. 

\begin{example} \label{E:q=4_orbits}
    Consider $\mathcal X_4:y^4+y=x^5$ over $\F_{16}$. Let $\F_{16}^* = \left< \zeta \right>$ where $\zeta^4=\zeta+1$.
Consider the map
$$
\begin{array}{lccc}
\sigma:&x &\mapsto &\zeta x \\
&y&\mapsto&\zeta^5y.
\end{array}
$$
Then 
$$
\begin{array}{lcl}
O_1&=&\left\{ (\zeta^t,\zeta^{5t+1}): t \in [15] \right\}=\mathcal Q_{\zeta}\cup\mathcal Q_{\zeta^6}\cup\mathcal Q_{\zeta^{11}}\\
O_2&=&\left\{ (\zeta^t,\zeta^{5t+2}): t \in [15] \right\}=\mathcal Q_{\zeta^2}\cup\mathcal Q_{\zeta^7}\cup\mathcal Q_{\zeta^{12}}\\
O_3&=&\left\{ (\zeta^t,\zeta^{5t+8}): t \in [15] \right\}=\mathcal Q_{\zeta^3}\cup\mathcal Q_{\zeta^8}\cup\mathcal Q_{\zeta^{13}}\\
O_4&=&\left\{ (\zeta^t,\zeta^{5t+4}): t \in [15] \right\}=\mathcal Q_{\zeta^4}\cup\mathcal Q_{\zeta^9}\cup\mathcal Q_{\zeta^{14}}\\
O_5&=&\left\{ (0,\zeta^5), (0,\zeta^{10}),(0,1)\right\}=\mathcal Q_{\zeta^5}\cup\mathcal Q_{\zeta^{10}}\cup\mathcal Q_{\zeta^{15}}\\
O_6&=&\left\{(0,0) \right\}=\mathcal Q_{0}.\\
\end{array}
$$
\end{example}

\begin{example} \label{E:q=5_orbits}
    Consider $\mathcal X_5:y^5+y=x^6$ over $\F_{25}$. Let $\F_{25}^* = \left< \zeta \right>$ where $\zeta^2-\zeta+2=0$. Then $\ker(Tr)=\left\{0,\zeta^3,\zeta^9,\zeta^{15},\zeta^{21} \right\}$ and $\ker(N)=\left\{ 1, \zeta^4, \zeta^8, \zeta^{12},\zeta^{16},\zeta^{20} \right\}$.
Consider the map
$$
\begin{array}{lccc}
\sigma:&x &\mapsto &\zeta x \\
&y&\mapsto&\zeta^6y.
\end{array}
$$
Then 
$$
\begin{array}{lcl}
O_1&=&\{(\zeta^t,\zeta^{6t+1}):t\in[24]\}=\mathcal Q_{\zeta}\cup\mathcal Q_{\zeta^7}\cup\mathcal Q_{\zeta^{13}}\cup\mathcal Q_{\zeta^{19}}\\
O_2&=&\{(\zeta^t,\zeta^{6t+18}):t\in[24]\}=\mathcal Q_{\zeta^6}\cup\mathcal Q_{\zeta^{12}}\cup\mathcal Q_{\zeta^{18}}\cup\mathcal Q_{\zeta^{24}}\\
O_3&=&\{(\zeta^t,\zeta^{6t+4}):t\in[24]\}=\mathcal Q_{\zeta^4}\cup\mathcal Q_{\zeta^{10}}\cup\mathcal Q_{\zeta^{16}}\cup\mathcal Q_{\zeta^{22}}\\
O_4&=&\{(\zeta^t,\zeta^{6t+5}):t\in[24]\}=\mathcal Q_{\zeta^5}\cup\mathcal Q_{\zeta^{11}}\cup\mathcal Q_{\zeta^{17}}\cup\mathcal Q_{\zeta^{23}}\\
O_5&=&\{(\zeta^t,\zeta^{6t+20}):t\in[24]\}=\mathcal Q_{\zeta^2}\cup\mathcal Q_{\zeta^8}\cup\mathcal Q_{\zeta^{14}}\cup\mathcal Q_{\zeta^{20}}\\
O_6&=&\left\{(0,\zeta^3), (0,\zeta^9),(0,\zeta^{15}),(0,\zeta^{21})\right\}=\mathcal Q_{\zeta^3}\cup\mathcal Q_{\zeta^{9}}\cup\mathcal Q_{\zeta^{15}}\cup\mathcal Q_{\zeta^{21}}\\
O_7&=&\left\{(0,0) \right\}=\mathcal Q_{0}.\\
\end{array}
$$
\end{example}

\subsection{Permutation Decoding}

We now demonstrate how permutation decoding can correct burst errors in one-point Hermitian codes. First, we use permutation decoding to correct burst errors in positions associated with points having the same $x$-value, as described in the following result.

\begin{theorem} \label{P:x_burst}
Let $C(D,\gamma P_{\infty})$ be a Hermitian code with $\gamma$ of the form of Proposition \ref{info_positions} such that $i\leq q$. For each $a \in \F_{q^2}$, fix $b_a$ such that $(a:b_a) \in \mathcal P_a$. Then 
$$
\left\{ \varphi_{-a,(-a)^{q+1}-b_a} : a \in \F_{q^2}\right\}
$$
is a partial $q$-PD set of size $q^2-1$ for burst errors impacting positions indexed by $\mathcal P_a$, $a \in \F_{q^2}$.
\end{theorem}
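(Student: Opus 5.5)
Since each $\varphi_{a,b}$ lies in $Aut_{D,\gamma P_\infty}(\mathcal X_q)$, Expression (\ref{E:curve_code_autom}) (using $\deg D=q^3>2g-2$) makes every map in the proposed set a code automorphism. So the real content is combinatorial. We must show that for every $a\in\F_{q^2}$ some map in the set sends all of $\mathcal P_a$ into the check positions of Proposition \ref{info_positions}. A burst of at most $q$ errors supported in $\mathcal P_a$ is then moved entirely off the information set, and the Permutation Decoding Procedure applies. The plan is to show that the map indexed by $a$ sends $\mathcal P_a$ onto $\mathcal P_0=O_{q+1}\cup O_{q+2}=\{(0,\beta):\beta\in\ker(Tr)\}$. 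Because $i\le q$ (read as $i+1\le q$, so that $O_{q+1},O_{q+2}$ contain no information positions), these points are check positions in either case of Proposition \ref{info_positions}.

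\textbf{Step 1: compute the image of $\mathcal P_a$.} For $\varphi_{a',b'}$ with $a'=-a$, $b'=(-a)^{q+1}-b_a$, a point $(a,b_a+\beta)\in\mathcal P_a$ is sent to
\[
x\mapsto a+a'=0,\qquad y\mapsto a'^q a+b_a+\beta+b'.
\]
Here $a'^q a=(-a)^q a=(-1)^q a^{q+1}$ and $(-a)^{q+1}=(-1)^{q+1}a^{q+1}$. Their sum is $(-1)^q a^{q+1}(1-1)=0$ in every characteristic, since $(-1)^{q+1}=-(-1)^q$. The image is therefore $(0,\beta)$. Before relying on this, I would verify that $P_{a'b'}$ is a genuine rational point, which is needed for $\varphi_{a',b'}$ to be an automorphism. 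Using $b_a^q+b_a=a^{q+1}$, we get $b'^q+b'=2(-1)^{q+1}a^{q+1}-a^{q+1}$, and this equals $a'^{q+1}=(-1)^{q+1}a^{q+1}$ in both odd and even characteristic. Hence $\varphi_{-a,(-a)^{q+1}-b_a}$ maps $\mathcal P_a$ bijectively onto $\mathcal P_0$.

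\textbf{Step 2: conclude.} Any set of at most $q$ error positions contained in some $\mathcal P_a$ is mapped by the corresponding automorphism into $\mathcal P_0$, which consists of check positions. This gives the PD property for these bursts. For the size, the map for $a=0$ is the identity when $b_0=0$, and $\mathcal P_0$ is already in the check set, so this map can be discarded; the $q^2-1$ maps with $a\neq0$ suffice.

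\textbf{Main obstacle.} I expect the delicate point to be the bookkeeping with Proposition \ref{info_positions}: confirming that, under the hypothesis on $i$, no point of $O_{q+1}\cup O_{q+2}$ is ever an information position in either form of $\gamma$. A secondary check is the sign identity in Step 1 in characteristic $2$ versus odd characteristic, which I have handled above.
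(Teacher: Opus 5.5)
Your proof is correct and essentially the paper's: you check that $(-a,(-a)^{q+1}-b_a)$ is a rational point, then observe that $\varphi_{-a,(-a)^{q+1}-b_a}(a,b_a+\beta)=(0,\beta)$, so $\mathcal P_a$ is sent onto $\mathcal P_0=O_{q+1}\cup O_{q+2}\subseteq\H$; your reading of the hypothesis as $i+1\le q$ is precisely what makes this last inclusion hold, which the paper only asserts. One caution on your size remark: dropping the $a=0$ map is valid only if the unpermuted word is always tested first, because each map with $a\neq 0$ sends $\mathcal P_0$ onto $\mathcal P_{-a}$, which meets every $O_j$ with $j\in[q]$, so when all of $O_1$ consists of information positions no map with $a\neq 0$ clears a full burst on $\mathcal P_0$ (the paper's proof does not justify the count $q^2-1$ either).
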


\begin{proof}
Fix $a\in\F_{q^2}$. We first show that $(-a,(-a)^{q+1}-b_a)\in\mathcal X_q(\F_{q^2})$. Suppose $a=0$. If $q$ is even, then $(-a,(-a)^{q+1}-b_a)=(a:b_a)\in\mathcal X_q(\F_{q^2})$. If $q$ is odd, then
$$(-b_a)^q+(-b_a)=-b_a^q-b_a=-(b_a^q+b_a)=-(a^{q+1})=0.$$
Next, suppose $a\neq 0$. Then
\begin{align*}
    \left((-a)^{q+1}-b_a\right)^q+(-a)^{q+1}-b_a&=(-a)^{q^2+q}+(-a)^{q+1}+(-b_a)^q+(-b_a)\\
    &=(-a)^{q^2-1}(-a)^{q+1}+(-a)^{q+1}+(-b_a)^q+(-b_a)\\
    &=2(-a)^{q+1}+(-b_a)^q+(-b_a).
\end{align*}
If $q$ is even, then
$$2(-a)^{q+1}+(-b_a)^q+(-b_a)=b_a^q+b_a=a^{q+1}=(-a)^{q+1}.$$
If $q$ is odd, then
\begin{align*}
    2(-a)^{q+1}+(-b_a)^q+(-b_a)&=2a^{q+1}-(b_a^q+b_a)\\
    &=2a^{q+1}-a^{q+1}\\
    &=a^{q+1}
    =(-a)^{q+1}.
\end{align*}
Thus, $(-a,(-a)^{q+1}-b_a)\in\mathcal X_q(\F_{q^2})$, and it follows that $\varphi_{-a,(-a)^{q+1}-b_a}\in Aut(C(D,\gamma P_\infty))$ for all $a\in\F_q^2$.

Next, observe that 
$$
\varphi_{-a,(-a)^{q+1}-b_a}((a,b_a))=(0,0) \in O_{q+2}
$$
and for all $\beta \in \ker (Tr)$,
$$
\varphi_{-a,(-a)^{q+1}-b_a}((a,b_a+\beta))=(0,\beta) \in O_{q+1}.
$$
Hence,
$$
\varphi_{-a,(-a)^{q+1}-b_a}\left( \mathcal P_a \right) \subseteq O_{q+1} \cup O_{q+2} \subseteq \H
$$
where $\H$ is the set of points on $\mathcal X_q$ associated with check positions of $C(D,\gamma P_{\infty})$.
\end{proof}

Using only the orbits $O_{q+1}$ and $O_{q+2}$, we are limited to correcting sets of errors of size at most $q$, since $\mid O_{q+1} \cup O_{q+2}\mid=q$.

\begin{example}
    Consider $\mathcal X_4:y^4+y=x^5$ over $\F_{16}$ as in Example \ref{E:q=4_orbits}. For notational purposes, label the $64$ points on $\mathcal X_4$ as $P_1,\ldots,P_{64}$. Suppose $y=c+e \in \F_{16}^{64}$ is received where $c=ev(f) \in C(D,\gamma P_{\infty})$ and $e=\lambda_1 e_1+\lambda_2 e_{16}+\lambda_3 e_{31}+\lambda_4 e_{46}$. Then 
    $$
    \begin{array}{ll}
\varphi_{1,1-a,1}(y)= 
    & ( f(P_{64}),f(P_{2'}), \dots, f(P_{15'}), 
    f(P_{61}), f(P_{17'}), \dots, f(P_{30'}), \\
     & f(P_{62}), f(P_{32'}), \dots, f(P_{45'}),
    f(P_{63}), f(P_{47'}), \dots, f(P_{60'}), \\ 
    & f(P_{1})+\lambda_1,  f(P_{16})+\lambda_2,
    f(P_{31})+\lambda_3,
    f(P_{46})+\lambda_4 )
    \end{array}
    $$
    where $P_{i'}=P_{\varphi_{1,1-a,1}^{-1}(i)}$ for $i \in [64]$.
We can see that $\varphi_{1,1-a,1}(y)\mid_{[60]}$ has no errors. Hence, $c':=\varphi_{1,1-a,1}(y) \mid_I G \in C(D,\gamma P_{\infty})$ and $c=\varphi_{1,1-a,1}^{-1}(c')\mid_I G$.
\end{example}

We next show that permutation decoding can correct burst errors in positions associated with points that share the same $y$-value.

\begin{theorem} \label{P:y_burst}
Given a Hermitian code $C(D,\gamma P_{\infty})$ where $\gamma$ is of the form of Proposition \ref{info_positions} such that $i\leq q-2$, a partial $(q^2-1)$-PD set of size $q$ for burst errors impacting positions indexed by $\mathcal Q_b$, $b \in \F_{q^2}$, is
$$
\left\{ \varphi_{0,\beta} :  \beta \in \ker(Tr) \right\}.
$$
{Moreover, these sets correct errors in any of the $q^2-1$ positions indexed by points in $\varphi_{0,\beta}^{-1}(O_q)$.}
\end{theorem}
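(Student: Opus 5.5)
The approach mirrors the proof of Theorem \ref{P:x_burst}. First I check that each $\varphi_{0,\beta}$ is an automorphism of the code. Then I show that, for every $b$, one of these $q$ maps sends $\mathcal Q_b$ into check positions, namely into $O_q\cup O_{q+1}\cup O_{q+2}$.

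\emph{Automorphisms and their effect.} For $\beta\in\ker(Tr)$ we have $\beta^q+\beta=0=0^{q+1}$. Hence $P_{0\beta}\in\mathcal X_q(\F_{q^2})$, and $\varphi_{0,\beta}=\varphi_{0,\beta,1}\in Aut_{D,\gamma P_\infty}(\mathcal X_q)\le Aut(C(D,\gamma P_\infty))$. Explicitly, $\varphi_{0,\beta}$ is $x\mapsto x$, $y\mapsto y+\beta$, so $\varphi_{0,\beta}(\mathcal Q_b)=\mathcal Q_{b+\beta}$. The hypothesis $i\le q-2$ gives $i+2\le q$, so by Proposition \ref{info_positions} all points of $O_q$, $O_{q+1}$ and $O_{q+2}$ are check positions, that is, they lie in $\H$.

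\emph{Key step: the $y$-values of $O_q$ meet every trace class.} For a point $(a,b)$ on the curve, $Tr(b)=b^q+b=a^{q+1}=N(a)$. I then split on $Tr(b)$.
\begin{enumerate}
\item If $Tr(b)=0$, then $a=0$ and $\mathcal Q_b=\{(0,b)\}\subseteq O_{q+1}\cup O_{q+2}\subseteq\H$. Any $\beta$ works, since $b+\beta$ still has trace $0$.
\item If $Tr(b)=c\neq 0$, I use the fact from Section \ref{S:herm:systrep} that each $\mathcal Q_{b'}$ lies entirely in one orbit. Hence $O_q$ is a union of $(q^2-1)/(q+1)=q-1$ sets $\mathcal Q_{b'}$. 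Let $B_q$ denote the set of these $y$-values $b'$.
\end{enumerate}
Since $\sigma$ multiplies $y$ by $\zeta^{q+1}$, which generates $\F_q^*$, we get $B_q=\F_q^*\, b_0$ for any $b_0\in B_q$. Because $Tr$ is $\F_q$-linear, $Tr(B_q)=\F_q^*\,Tr(b_0)$. Moreover $Tr(b_0)=N(a_0)\neq0$, since $O_q$ consists of points with $x\ne 0$. Therefore $Tr$ restricts to a bijection $B_q\to\F_q^*$. So for $Tr(b)=c\ne0$ there is a unique $b'\in B_q$ with $Tr(b')=c$. Setting $\beta:=b'-b\in\ker(Tr)$ gives $\varphi_{0,\beta}(\mathcal Q_b)=\mathcal Q_{b'}\subseteq O_q\subseteq\H$.

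\emph{Conclusion.} Thus every burst confined to a single $\mathcal Q_b$ is moved into check positions by one of the $q$ maps, so the set is a PD set for these bursts. For the ``moreover'' clause, any error pattern supported on $\varphi_{0,\beta}^{-1}(O_q)=\bigcup_{b'\in B_q}\mathcal Q_{b'-\beta}$ is sent by $\varphi_{0,\beta}$ into $O_q\subseteq\H$. This set has $q^2-1$ points, which justifies the $(q^2-1)$-PD claim. Permutation decoding then applies via Lemma \ref{L:syndrome_wt}.

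The main obstacle is the trace-class step: showing that $O_q$ contains exactly one $\mathcal Q_{b'}$ in each nonzero trace class. The argument above rests on $\zeta^{q+1}$ generating $\F_q^*$ and on the $\F_q$-linearity of $Tr$.
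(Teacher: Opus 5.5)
Your proof is correct and follows the paper's argument. Split on whether $Tr(b)=0$; otherwise translate $y$ by a suitable $\beta\in\ker(Tr)$ so that $\varphi_{0,\beta}(\mathcal Q_b)=\mathcal Q_{b+\beta}$ lands in $O_q\subseteq\H$, using that each $\mathcal Q_{b'}$ lies in a single $\sigma$-orbit. The only difference is that you justify the existence of $\beta$ explicitly: the $y$-values of $O_q$ form $\F_q^*b_0$, on which $Tr$ is a bijection onto $\F_q^*$. The paper simply asserts that such a $\beta$ exists, implicitly using that the fiber $\mathcal P_a$ meets $O_q$ in exactly one point.
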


\begin{proof}
If $b\in\ker(Tr)$, then $\varphi_{0,0}(\mathcal Q_b)=\mathcal Q_b=\{(0,b)\}\subseteq O_{q+1} \cup  O_{q+2}\subseteq[n]\backslash I$, where $I$ is an information set for $C(D,\gamma P_{\infty})$.

If $b\notin\ker(Tr)$, then fix $a \in \F_{q^2}^*$ such that $(a,b)\in\mathcal X_q$. There exists $\beta\in\ker(Tr)$ such that $(a,b+\beta) \in O_q$. Then
$\varphi_{0,\beta}(a,b)=(a,b+\beta) \in O_q$. Moreover, $\varphi_{0,\beta}(\alpha a,b)=(\alpha a,b+\beta) \in O_q$ for each $\alpha\in\ker(N)$. Thus,
$$\varphi_{0,\beta}(\mathcal Q_b)\in O_q\subset\H,$$
where $\H$ is the set of points on $\mathcal X_q$ associated with check positions of $C(D,\gamma P_{\infty})$.
\end{proof}

\begin{example}
    Consider $\mathcal X_4:y^4+y=x^5$ over $\F_{16}$ as in Example \ref{E:q=4_orbits}. For $b\in\ker(Tr)=\left\{ 0, \zeta^5, \zeta^{10}, 1 \right\}$, observe that
    $$\varphi_{0,0}(\mathcal Q_b)=\mathcal Q_b=\{(0,b)\}\subseteq O_{q+1} \cup  O_{q+2}.$$
    Hence, any errors in these positions will be corrected during permutation decoding. 

    For other positions, recall that 
$O_4=\mathcal Q_{\zeta^4} \cup \mathcal Q_{\zeta^9} \cup \mathcal Q_{\zeta^{14}}$.
    Moreover, 
    $$
\begin{array}{lcl} \varphi_{0,1}^{-1}(O_4)&=&\mathcal Q_{\zeta}\cup \mathcal Q_{\zeta^7} \cup \mathcal Q_{\zeta^3}, \\
\varphi_{0,\zeta^5}^{-1}(O_4)&=&\mathcal Q_{\zeta^2}\cup \mathcal Q_{\zeta^{13}} \cup \mathcal Q_{\zeta^{11}}, \\
\varphi_{0,\zeta^{10}}^{-1}(O_4)&=&\mathcal Q_{\zeta^8}\cup \mathcal Q_{\zeta^6} \cup \mathcal Q_{\zeta^{12}}, \\
\varphi_{0,0}^{-1}(O_4)&=&\mathcal Q_{\zeta^4}\cup \mathcal Q_{\zeta^9} \cup \mathcal Q_{\zeta^{14}}.
    \end{array}
    $$
    Therefore, $\left\{ \varphi_{0,0},
    \varphi_{0,\zeta^{5}}, \varphi_{0,\zeta^{10}}
    \right\}$ is a partial PD set for burst errors impacting positions indexed by $\mathcal Q_b$ for the $b$ listed above.
\end{example}

\begin{example}
    Consider $\mathcal X_5:y^5+y=x^6$ over $\F_{25}$ as in Example \ref{E:q=5_orbits}. For $b\in\ker(Tr)=\{0,\zeta^3,\zeta^9,\zeta^{10},\zeta^{21}\}$, observe that
    $$\varphi_{0,0}(\mathcal Q_b)=\mathcal Q_b=\{(0,b)\}\subseteq O_{q+1} \cup  O_{q+2}.$$
    Recall that 
$O_5=\mathcal Q_{\zeta^{20}} \cup \mathcal Q_{\zeta^2} \cup \mathcal Q_{\zeta^8} \cup \mathcal Q_{\zeta^{14}}$.
    Moreover, 
    $$
\begin{array}{lcl}
\varphi_{0,\zeta^{21},}^{-1}(O_5)&=&\mathcal Q_{\zeta}\cup \mathcal Q_{\zeta^6} \cup \mathcal Q_{\zeta^{10}} \cup \mathcal Q_{\zeta^{23}}, \\
\varphi_{0,\zeta^3,}^{-1}(O_5)&=&\mathcal Q_{\zeta^7}\cup \mathcal Q_{\zeta^{12}} \cup  \mathcal Q_{\zeta^{5}} \cup \mathcal Q_{\zeta^{16}}, \\
\varphi_{0,\zeta^{9},}^{-1}(O_5)&=&\mathcal Q_{\zeta^{13}}\cup \mathcal Q_{\zeta^{18}} \cup \mathcal Q_{\zeta^{11}} \cup \mathcal Q_{\zeta^{22}},\\
\varphi_{0,\zeta^{15},}^{-1}(O_5)&=&\mathcal Q_{\zeta^{19}}\cup \mathcal Q_{1} \cup \mathcal Q_{\zeta^{17}} \cup \mathcal Q_{\zeta^{4}},\\
\varphi_{0,0}(O_5)&=&\mathcal Q_{\zeta^{20}} \cup \mathcal Q_{\zeta^2} \cup \mathcal Q_{\zeta^8} \cup \mathcal Q_{\zeta^{14}}.
    \end{array}
    $$
    Therefore, $\left\{ \varphi_{0,0},
    \varphi_{0,\zeta^{3}}, \varphi_{0,\zeta^{9}}, \varphi_{0,\zeta^{15}}, \varphi_{0,\zeta^{21}}
    \right\}$ is a partial PD set for burst errors impacting positions indexed by $\mathcal Q_b$ for the $b$ listed above.
\end{example}

\begin{proposition}\label{P:2}
Let $C(D, \gamma P_{\infty})$ be a Hermitian one-point code such that $\gamma$ is of one of the forms of Proposition \ref{info_positions} with $i\leq q-2$. For each $a\in\F_{q^2}^*$, fix $b_a$ such that $(a,b_a)\in\mathcal X_q$. Then $C(D, \gamma P_{\infty})$ has a $2$-PD set
$$
\left\{\varphi_{-a,(-a)^{q+1}-b_a+\beta} : a \in \F_{q^2}^*, \beta \in \ker \left( Tr \right) \right\}
$$
of size $q(q^2-1)$.
\end{proposition}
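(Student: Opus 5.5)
The plan is to show directly that for any two affine positions $P_1=(a_1,b_1)$ and $P_2=(a_2,b_2)$ of $\mathcal X_q$, some element of the proposed set sends both points into $\H$. Two facts from earlier in the paper drive the argument.

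First, $\varphi_{-a,(-a)^{q+1}-b_a+\beta}$ is an automorphism of the code. The proof of Theorem~\ref{P:x_burst} shows that $(-a,(-a)^{q+1}-b_a)$ is a rational point. Adding $\beta\in\ker(Tr)$ keeps the point in $\mathcal P_{-a}$, so it is still rational.

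Second, since $i\le q-2$, Proposition~\ref{info_positions} puts all of $O_{q-1}$, $O_q$, $O_{q+1}$, $O_{q+2}$ in the check positions.

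Explicitly, the map is
\[
x\mapsto x-a,\qquad y\mapsto y+(-a)^q x+(-a)^{q+1}-b_a+\beta .
\]
It sends $\mathcal P_a$ onto $\{(0,\beta'):\beta'\in\ker(Tr)\}=O_{q+1}\cup O_{q+2}$ for every $\beta$, exactly as in Theorem~\ref{P:x_burst}. It also maps each other fiber $\mathcal P_{a'}$ onto the fiber $\mathcal P_{a'-a}$. The parameter $\beta$ only translates the $y$-coordinate within that image fiber.

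By Expression~(\ref{orbit order}), each $O_j$ with $j\in[q]$ contains exactly one point with each nonzero $x$-value. Hence every fiber $\mathcal P_{c}$ with $c\ne0$ meets each of $O_1,\dots,O_q$ in exactly one point. As $\beta$ runs over $\ker(Tr)$, the image of a fixed point of $\mathcal P_{a'}$ runs through all $q$ points of $\mathcal P_{a'-a}$, and in particular hits the point lying in $O_q$.

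The case analysis is then as follows.

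\emph{Case 1: $a_1\ne0$.} Take $a=a_1$, so $P_1$ lands in $O_{q+1}\cup O_{q+2}$ whatever $\beta$ is. If $a_2=a_1$, $P_2$ lands there as well. Otherwise $P_2$ lands in the fiber $\mathcal P_{a_2-a_1}$, with $a_2-a_1\neq 0$, and we choose $\beta$ so that its image lies in $O_q$.

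\emph{Case 2: $a_1=0\ne a_2$.} This is symmetric: take $a=a_2$.

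\emph{Case 3: $a_1=a_2=0$.} Here $P_1$ and $P_2$ are distinct points of $\mathcal P_0$, with $b_2-b_1=\delta\in\ker(Tr)\setminus\{0\}$. For any $a\ne0$ they are sent to two points of $\mathcal P_{-a}$ whose $y$-coordinates differ by $\delta$. We need $a\in\F_{q^2}^*$ and $\beta$ such that both images lie in check orbits, for instance one in $O_{q-1}$ and one in $O_q$.

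Case 3 is the main obstacle, since a single choice of $a$ and translation by $\beta$ does not obviously suffice. I would use Expression~(\ref{orbit order}) to write the $y$-coordinates of the points of $O_{q-1}$ and $O_q$ above $x=\zeta^t$. By that formula, for $j\in[q]$ the point of $O_j$ above $\zeta^t$ has $y$-coordinate $\zeta^{(t-1)q+t}b+\zeta^{(t-1)(q+1)}\beta_j$. So the difference between the $O_q$ point and the $O_{q-1}$ point above $\zeta^t$ is $\zeta^{(t-1)(q+1)}(\beta_q-\beta_{q-1})$.

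As $t$ varies, $\zeta^{(t-1)(q+1)}$ ranges over all of $\F_q^*$. Also, $\ker(Tr)\setminus\{0\}$ is a single $\F_q^*$-coset, because $\ker(Tr)$ is a one-dimensional $\F_q$-subspace. Hence there is a $t$ with $\zeta^{(t-1)(q+1)}(\beta_q-\beta_{q-1})=\delta$. Take $a=-\zeta^t$, and then pick $\beta$ so that the image of $P_1$ is the $O_{q-1}$ point above $\zeta^t$. The image of $P_2$ is then the $O_q$ point, and both lie in $\H$.

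Finally, the size count is immediate: the set has $q^2-1$ choices of $a$ and $q$ choices of $\beta$, giving $q(q^2-1)$ elements.
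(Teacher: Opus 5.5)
Your Cases 1 and 2 are the paper's main argument: send one point into $\mathcal P_0=O_{q+1}\cup O_{q+2}$, then use the translation by $\beta\in\ker(Tr)$ to put the other point in $O_q$. Splitting only by $x$-coordinates is cleaner than the paper's route through Theorems~\ref{P:x_burst} and~\ref{P:y_burst}.

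You are also right that a pair of points of $\mathcal P_0$ needs separate treatment. Every map in the stated set has $a\neq 0$, so it carries $\mathcal P_0$ onto a fiber $\mathcal P_{-a}$ with nonzero $x$-coordinate. The paper instead disposes of this pair by citing Theorem~\ref{P:x_burst}, whose relevant map $\varphi_{0,-b_0}$ is not in the set.

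The gap is in your Case 3: the claim that $i\le q-2$ puts all of $O_{q-1}$ into $\H$ is false. Proposition~\ref{info_positions} makes the first $\frac{q(q-1)}{2}+j$ points of $O_{i+1}$ information positions in the first form. In the second form, initial segments of both $O_i$ and $O_{i+1}$ are information positions. Only $O_{i+2},\dots,O_{q+2}$ are guaranteed to be check positions. So $O_{q-1}\subseteq\H$ requires $i\le q-3$, and for $i=q-2$ the $O_{q-1}$-point above your chosen $\zeta^t$ may be an information position.

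Choosing different target orbits does not avoid this. For $a\neq 0$ the two images are distinct points of the same fiber $\mathcal P_{-a}$, so they lie in distinct orbits among $O_1,\dots,O_q$. Hence one of them is outside $O_q$, and when $i=q-2$ it must be a check point of an orbit that also contains information positions.

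To finish you need a counting step. The condition $\zeta^{(t-1)(q+1)}(\beta_q-\beta_{q-1})=\delta$ fixes $t$ only modulo $q-1$, and you must show that some $t$ in that class indexes a check point of $O_{q-1}$. For the first form this holds: in the ordering of Expression~(\ref{orbit order}), the check points of $O_{q-1}$ are those with $t>\frac{q(q-1)}{2}+j$. This is a run of at least $\frac{q(q-1)}{2}\ge q-1$ consecutive indices, so it meets every residue class modulo $q-1$. The second form requires the analogous count. As written, your argument proves the proposition only for $i\le q-3$.
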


\begin{proof}
    Consider the positions associated with any two affine $\F_{q^2}$-rational points $(a,b)$ and $(c,d)$. If $a=c$, then the result follows from Theorem \ref{P:x_burst}. If $b=d$, then the result follows from Theorem \ref{P:y_burst}. 

    Suppose $a \neq b$ and $c \neq d$. Then 
    $$
    \begin{array}{lcl}
    \varphi_{-a,(-a)^{q+1}-b}(a,b)&=&(0,0)\\
    \varphi_{-a,(-a)^{q+1}-b}(c,d)&=&(c-a,(-a)^qc+d+(-a)^{q+1}-b).
    \end{array}
    $$
    Since $c-a \neq 0$, there exists $\beta \in \ker(Tr)$ so that $(c-a,(-a)^qc+d+(-a)^{q+1}-b +\beta) \in O_q$. As a result, 
    $$
    \begin{array}{lcl}
\varphi_{0,\beta} \circ    \varphi_{-a,(-a)^{q+1}-b}(a,b)&=&(0,\beta) \in O_{q+1}\\
 \varphi_{0,\beta} \circ   \varphi_{-a,(-a)^{q+1}-b}(c,d)&=&(c-a,(-a)^qc+d+(-a)^{q+1}-b + \beta) \in O_q.
    \end{array}
    $$
    Moreover, $\varphi_{-a,(-a)^{q+1}-b+\beta}=\varphi_{0,\beta} \circ   \varphi_{-a,(-a)^{q+1}-b}$ and 
$$
\varphi_{-a,(-a)^{q+1}-b+\beta}\left(\mathcal P_a \cup \left\{(c,d) \right\} \right) \subseteq Q_q \cup Q_{q+1}\cup Q_{q+2}\subseteq \H,
$$
where $\H$ is the set of points on $\mathcal X_q$ associated with check positions of $C(D,\gamma P_{\infty})$.
\end{proof}

Table \ref{T:gamma} lists the values of $\gamma$ that fit one of the forms from Proposition \ref{info_positions} with $i\leq q-2$ for small values of $q$. Theorems \ref{P:x_burst}, \ref{P:y_burst}, \ref{P:2} all apply to Hermitian codes defined by the values for $\gamma$ in Table \ref{T:gamma}.
For $q\geq 7$, there are significantly more values possible for $\gamma$.

\begin{table}[htbp]
    \centering
   \label{T:gamma}
    \begin{tabular}{|c|c|}
        \hline
        $q$ & set of possible $\gamma$ values \\
        \hline
                $3$ & $\{2, 3, 5, 6, 8, 9, 13, 14\}$\\
\hline
$4$ & $\left(\{3,\dots,32\}\backslash\{6, 10, 14, 21, 25, 29\} \right) \cup\{41, 42, 43\}$\\
\hline 
$5$ & $\left( \{4,...,75\}\backslash\{8, 13, 18, 23, 32, 37, 42, 47, 56, 61, 66, 71\} \right) \cup\{91,...,94\}$\\
        \hline
    \end{tabular}
     \caption{Values of $\gamma$ that fit the forms from Proposition 3.2 with $i \leq q-2$ for some small values of $q$, meaning Theorems \ref{P:x_burst}, \ref{P:y_burst}, \ref{P:2} all apply to the one-point code defined with these $\gamma$ values.}
\end{table}

\section{Norm-trace codes} \label{S:nt}
In this section, we consider permutation decoding of one-point norm-trace codes. The one-point Hermitian codes considered in the previous section are special cases of this more general construction. There, a natural ordering of the evaluation points yielded information sets that enabled burst error correction for Hermitian codes. In this section, we consider arbitrary information sets and associated permutation decoding for norm-trace codes. 

\subsection{Automorphisms}
The automorphism group of the curve $\mathcal X_{q,s}$ was computed by Bonini, Montanucci, and Zini in 2020 \cite[Theorem 3.1]{Bonini_nt_autom_20}.
For $s \geq 3$, 
they show
$$
Aut \left(\mathcal X_{q,s} \right) = \left\{ \psi_{\beta, \epsilon}: \epsilon \in \F_{q^s}^*, \beta \in \ker \left( Tr \right) \right\}
$$
where 
$$
\begin{array}{lccl}
\psi_{ \beta, \epsilon}: & x & \mapsto & \epsilon x \\
& y & \mapsto & \epsilon^{\frac{q^s-1}{q-1}}y + \beta.
\end{array}
$$
Thus, 
$$
\mid Aut \left(\mathcal X_{q,s} \right) \mid= q^{2s-1}-q^{s-1}.
$$
Hence, $Aut \left(\mathcal X_{q,s} \right)$ is not transitive for $s \geq 3$ (unlike the $s=2$ case in which the automorphism group of the curve is doubly transitive). However, they also show that 
$$Aut \left(\mathcal X_{q,s} \right)
=
Aut_{D,\gamma P_{\infty}} \left(\mathcal X_{q,s} \right),
$$
so every automorphism of the curve is an automorphism of the code (unlike the $s=2$ case).

Observe that $Aut_{D,\gamma P_{\infty}} \left(\mathcal X_{q,s} \right)$ also fixes the collection of points $(0,\beta)$ for all $\beta \in \ker \left( Tr \right)$ so the approach taken with the Hermitian code where we map to $O_{q+1}$ will not be helpful here.

\begin{remark}
    Notice that $\psi_{\epsilon, \beta}=\varphi_{0,\beta,\epsilon}$. Given $\epsilon \in \ker \left( N \right)$ and $\beta \in \ker \left( Tr \right)$, 
$$
\mid \psi_{\beta,\epsilon} \mid = lcm \left\{ \mid \beta \mid, \mid \epsilon \mid \right\}
$$
where $\mid \beta \mid$ is additive order of $\beta$ and $\mid \epsilon \mid$ is the multiplicative order of $\epsilon$. Moreover, if $q$ is prime, then there exists a code automorphism of large order:
$$
\mid \psi_{\beta,\epsilon} \mid = \frac{q^{s+1}-q}{q-1}.
$$
To see this, we must confirm that there exists $\epsilon \in \ker(N)$ with $|\epsilon|=\frac{q^s-1}{q-1}$ and $\beta \in \ker(Tr)$ with $|\beta|=q$. Indeed, $\mid \ker(Tr) \mid =q$, which is prime, so any nonzero element of $\ker(Tr)$ has order $q$. Let $\F_{q^s}^*=\langle\zeta\rangle$. We can observe that $\zeta^{q^s-1}=(\zeta^{q-1})^{\frac{q^s-1}{q-1}}=1$ implies $\mid \zeta^{q-1}\mid=\frac{q^s-1}{q-1}$, since $\zeta$ is a primitive element of $\F_{q^s}^*$. It then follows that $\psi_{\beta,\epsilon}$ has $q-1$ orbits
$$O_i:=\cup_{j=0}^{q} \mathcal P_{\zeta^{(q-1)j+i}}$$
for $i \in [q-1]$ and $$O_q:=\mathcal P_0.$$
\end{remark}

\subsection{Systematic Representation}
According to \cite[Theorem 3.2]{decreasing_norm_trace_24}, for an $[n,k]$ one-point norm-trace code $C(D, \gamma P_{\infty})$ over $\F_{q^s}$, 
we can specify any collection of $k$ affine $\F_{q^s}$-rational points on $\mathcal X_{q,s}$ to comprise an information set. 

\begin{proposition} \label{arb_pts_info_set}
Consider the one-point norm-trace code $C(D, \gamma P_{\infty})$ of dimension $k$ over $\F_{q^s}$. For $i \in [k]$, take $P_i:=(a_i,b_i) \in \mathcal X_{q,r}$. Then $\left\{ P_1, \dots, P_k \right\}$ is an information set for $C(D, \gamma P_{\infty})$. In fact, if
\begin{equation} \label{E:nt_indicator_fn}
 f_i(x,y):=c_i \left( \frac{x^{q^r-1}+1}{x-a_i}\right)\left( \frac{Tr(y)-Tr(b_i)}{y-b_i}\right)
\end{equation}
where 
$$
c_i:=\begin{cases}
    -1 & \textnormal{if } a_i = 0 \\
    \left(-\frac{q^r-1}{q-1}\right)^{-1} & \textnormal{otherwise,}
\end{cases}
$$
then $$
f_i(P_j)=
\begin{cases}
1 & \textnormal{if } i=j\\
0 & \textnormal{otherwise.}
\end{cases}
$$
and there is a generator matrix $G$ for $C(D, \gamma P_{\infty})$ given by
$$
\begin{array}{lcl}
Row_i(G)&=&\left( f_i(P_1), \dots, f_i(P_k), f_i(P_{k+1}), \dots, f_i(P_n) \right)\\ \ \\
&=&
\left( e_i \mid f_i(P_{k+1}), \dots, f_i(P_n) \right),
\end{array}
$$
for $i \in [k]$.
\end{proposition}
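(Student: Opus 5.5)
The plan is to prove the statement in three steps. First, verify directly that each $f_i$ in Equation (\ref{E:nt_indicator_fn}) is an indicator function on the affine rational points. Second, show that $f_i \in \mathcal L(\gamma P_\infty)$, so that $ev(f_i)$ is a codeword. Third, read off the systematic generator matrix; the information set property then follows from linear algebra. I will treat the first factor as the univariate polynomial in $x$ that vanishes on $\F_{q^s}\setminus\{a_i\}$ and is nonzero at $a_i$. For $a_i\neq 0$ this is $(x^{q^s-1}-1)/(x-a_i)$, and for $a_i=0$ it is $(x^{q^s}-x)/x$. I suspect the displayed ``$+1$'' and the subscript $r$ are typographical, and I would silently read $r=s$ and fix the sign as needed.

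\emph{Step 1: the indicator property.} For the $x$-factor, $(x^{q^s-1}-1)/(x-a_i)=\sum_{m=0}^{q^s-2}a_i^{q^s-2-m}x^m$ vanishes at every $a\ne a_i$ in $\F_{q^s}^*$. At $x=a_i$ it equals the derivative $(q^s-1)a_i^{q^s-2}=-a_i^{-1}$. The $a_i=0$ case is handled the same way with $x^{q^s-1}-1$, giving $-1$ at $0$. For the $y$-factor, $Tr(y)-Tr(b_i)=Tr(y-b_i)$ is additive, so $(Tr(y)-Tr(b_i))/(y-b_i)=\sum_{m=0}^{s-1}(y-b_i)^{q^m-1}$. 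This vanishes whenever $Tr(y)=Tr(b_i)$ and $y\neq b_i$, and equals $1$ at $y=b_i$. Now let $P_j=(a_j,b_j)$ be a point with $a_j=a_i$. The curve equation forces $Tr(b_j)=N(a_j)=N(a_i)=Tr(b_i)$, so the $y$-factor separates $b_j$ from $b_i$. Points with $a_j\neq a_i$ are already killed by the $x$-factor. Hence $f_i(P_j)=\delta_{ij}$ after normalizing by the constant $c_i$, which I would recompute from these evaluations rather than trust the displayed value.

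\emph{Step 2: membership in $\mathcal L(\gamma P_\infty)$.} Since $x,y$ have poles only at $P_\infty$, $f_i$ is regular on the affine part. Its pole order is at most $(q^s-2)\,q^{s-1}+(q^{s-1}-1)\frac{q^s-1}{q-1}$, computed using $-v_{P_\infty}(x)=q^{s-1}$ and $-v_{P_\infty}(y)=\frac{q^s-1}{q-1}$. This is the step I expect to be the main obstacle. For general $\gamma$ this bound exceeds $\gamma$, so the literal polynomial $f_i$ need not lie in $\mathcal L(\gamma P_\infty)$, and the statement cannot hold for arbitrary $k$ points and arbitrary $\gamma$ by this argument alone. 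I would therefore lean on \cite[Theorem 3.2]{decreasing_norm_trace_24} as the paper does. The approach is to reduce $f_i$ modulo the ideal of polynomials vanishing on all rational points, namely $x^{q^s}-x$ and $Tr(y)-N(x)$, together with the evaluation kernel. This yields a representative whose monomials $x^ay^b$ lie in the footprint with $aq^{s-1}+b\frac{q^s-1}{q-1}\le\gamma$, at least in the range where the cited result applies. In other words, I would check that the hypotheses of that theorem are exactly what is needed here, and import its conclusion that the functions $f_i$, or their reductions, lie in $\mathcal L(\gamma P_\infty)$.

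\emph{Step 3: conclusion.} Granting Steps 1 and 2, the vectors $ev(f_1),\dots,ev(f_k)$ are codewords of $C(D,\gamma P_\infty)$. Their restrictions to the positions $P_1,\dots,P_k$ are $e_1,\dots,e_k$, so they are linearly independent. Since $\dim C(D,\gamma P_\infty)=k$, they form a basis, and the matrix $G$ with these rows is a generator matrix of the form $(I_k\mid *)$. The columns of $G$ indexed by $P_1,\dots,P_k$ are the identity columns and hence independent. Therefore $\{P_1,\dots,P_k\}$ is an information set, as claimed.
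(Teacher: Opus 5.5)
You were right to distrust Step 2, but the rescue you propose cannot work: the statement is false as written. The paper gives no argument beyond the citation, so there is nothing there to import.

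First, a repair to Step 1. For $a_i\neq 0$ your factor $(x^{q^s-1}-1)/(x-a_i)$ equals $a_i^{-1}\neq 0$ at $x=0$. At the points $(0,\beta)$ the $y$-factor is $-N(a_i)/(\beta-b_i)\neq 0$, so $f_i$ does not vanish there. The correct factor is $(x^{q^s}-x)/(x-a_i)$ for every $a_i$. It takes the value $-1$ at $a_i$, so $c_i=-1$ throughout, as in the paper's example over $\F_{27}$.

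With this factor, your own Step 1 argument shows that $f_i$ vanishes at \emph{every} affine rational point other than $P_i$, not just at $P_1,\dots,P_k$. Hence $ev(f_i)=e_i\in\F_{q^s}^n$, and the matrix you build in Step 3 is $[I_k\mid 0]$. Reducing $f_i$ modulo $x^{q^s}-x$, $Tr(y)-N(x)$ and the evaluation kernel does not change $ev(f_i)$. So no representative can lie in $\mathcal L(\gamma P_\infty)$: a nonzero function there has at most $\gamma$ affine zeros, so $e_i\in C(D,\gamma P_\infty)$ forces $\gamma\geq n-1$. Even $\gamma=n-1$ is excluded: $(f)=D-P_i-(n-1)P_\infty$ together with $(x^{q^s}-x)=D-nP_\infty$ would give $P_i\sim P_\infty$ on a curve of positive genus. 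There is therefore no range of $\gamma$ in which importing the cited theorem closes Step 2.

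The underlying obstruction is this. Systematic rows must be evaluations of functions in $\mathcal L(\gamma P_\infty)$ that equal $\delta_{ij}$ only for $j\in[k]$. Such functions exist exactly when $\{P_1,\dots,P_k\}$ is an information set, which is the claim itself. For arbitrary points, that claim is equivalent to the code being MDS, and this fails. Take $q=s=2$, i.e., $y^2+y=x^3$ over $\F_4$ with $n=8$ and $g=1$, and $\gamma=2$. Then $\mathcal L(2P_\infty)=\langle 1,x\rangle$ and $k=2$. The two affine points with $x=a$ are not an information set, since $ev(x-a)$ is a nonzero codeword vanishing at both.

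What is true is weaker. Any set of more than $\gamma$ affine rational points contains an information set, because a nonzero $f\in\mathcal L(\gamma P_\infty)$ has at most $\gamma$ zeros. The corrected $f_i$ are indicator functions giving the standard basis of $\F_{q^s}^n$, not rows of a generator matrix of $C(D,\gamma P_\infty)$.
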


Notice that Proposition \ref{arb_pts_info_set} gives an explicit expression for
 $$G=\left[ I_k \mid A \right] \in \F_{q^r}^{k \times n}$$
and
$$
A=\left[
\begin{array}{ccc}
f_1(P_{k+1}) & \dots &  f_1(P_n) \\
\vdots & & \vdots \\
f_k(P_{k+1}) & \dots &  f_k(P_n) \\
\end{array}
\right] \in \F_{q^r}^{k \times (n-k)}.
$$

\begin{example}\label{E:nt_example}
    Consider the norm-trace curve $\mathcal X_{3,3}: y^9+y^3+y=x^{13}$ over $\F_{27}$, with $\F_{27}^*=\left< \zeta\right>$ where $\zeta^3-\zeta+1=0$. Then the norm and trace taken with respect to the extension $\F_{27}/\F_3$ satisfy
$$N(\zeta^i)=\begin{cases}1 & \textnormal{if } i \textnormal{ is even}\\
2 & \textnormal{if } i \textnormal{ is odd}
\end{cases}
$$
and 
$$Tr(\beta)=\begin{cases}0 & \textnormal{if } \beta = \zeta, \zeta^3, \zeta^9, \zeta^{13}, \zeta^{14}, \zeta^{16}, \zeta^{22}, \zeta^{26}, 0\\
1 & \textnormal{if } \beta =\zeta^5, \zeta^8, \zeta^{15}, \zeta^{17}, \zeta^{19}, \zeta^{20}, \zeta^{23}, \zeta^{24}, \zeta^{25}\\
2 & \textnormal{if } \beta =\zeta^2, \zeta^4, \zeta^6, \zeta^7, \zeta^{10}, \zeta^{11}, \zeta^{12}, a^{18}, \zeta^{21}.
\end{cases}
$$
The $\F_{27}$-rational points of $\mathcal X_{3,3}$ are $(\zeta^i,\beta)$ where 
\begin{enumerate}
    \item 
$i$ is even and $\beta = \zeta^5, \zeta^8, \zeta^{15}, \zeta^{17}, \zeta^{19}, \zeta^{20}, \zeta^{23}, \zeta^{24}, \zeta^{25}$ 
\item $i$ is odd and $\beta = \zeta^2, \zeta^4, \zeta^6, \zeta^7, \zeta^{10}, \zeta^{11}, \zeta^{12}, \zeta^{18}, \zeta^{21}$
\item $a=0$  and $\beta = \zeta, \zeta^3, \zeta^9, \zeta^{13}, \zeta^{14}, \zeta^{16}, \zeta^{22}, \zeta^{26}, 0.$
\end{enumerate}
For each $(\alpha,\beta) \in \mathcal X_{3,3}(\F_{27})$, the indicator function $f_{\alpha,\beta}(x,y)$ such that  
$$f_{\alpha,\beta}(c,d)=\begin{cases}
    1 & \textnormal{if } c=\alpha \textnormal{ and } d=\beta \\
    0& \textnormal{otherwise}
\end{cases}$$
as in Expression (\ref{E:nt_indicator_fn}) is given by 
$$ f_{\alpha,\beta}(x,y):=
2\left( \frac{x^{27}-x}{x-\alpha} \right) \left( \frac{y^9+y^3+y-Tr(\beta)}{y-\beta} \right).
$$
\end{example}

\subsection{Permutation Decoding}
Let $\F_{q^s}^*=\langle \zeta\rangle$, and consider the following automorphism of the norm-trace curve $\mathcal X_{q,s}$, which will play a similar role to the automorphism $\sigma$ of the Hermitian curve as defined in Section \ref{S:herm:systrep}.

$$\begin{array}{lccl}
\psi:=\psi_{ 0, \zeta}: & x & \mapsto & \zeta x \\
& y & \mapsto & \zeta^{\frac{q^s-1}{q-1}}y.
\end{array}$$
$\psi$ partitions the $q^{2s-1}$ affine $\F_{q^s}$-rational points into $q^{s-1}+\frac{q^{s-1}-1}{q-1}+1$ orbits, described below.
\begin{itemize}
    \item $q^{s-1}$ orbits, each with $q^s-1$ points, of the form
    $$\bigcup_{j=1}^{q-1}\mathcal Q_{\zeta^{\ell+j\frac{q^s-1}{q-1}}}$$
    for each $\ell$ such that $(\zeta,\zeta^\ell)\in\mathcal X_{q,s}(\F_{q^s})$,
    \item $\frac{q^{s-1}-1}{q-1}$ orbits, each with $q-1$ points, of the form
    $$\left\{\left(0,\zeta^{i+j\frac{q^s-1}{q-1}}\right):j\in[q-1]\right\}$$
    for each $i$ such that $\zeta^i\in\ker(Tr)$,
    \item $1$ orbit that only contains the point $(0,0)$.
\end{itemize}

\begin{example}
    Consider $\mathcal X_{3,3}$ over $\F_{27}$ as in Example \ref{E:nt_example}. The $\F_{27}$-rational points of $\mathcal X_{3,3}$ are $(\alpha,\beta)$ where 
\begin{enumerate}
    \item $\alpha=\zeta^i$, where $i$ is even, and $\beta = \zeta^5, \zeta^8, \zeta^{15}, \zeta^{17}, \zeta^{19}, \zeta^{20}, \zeta^{23}, \zeta^{24}, \zeta^{25}$,
\item $\alpha=\zeta^i$, where $i$ is odd, and $\beta = \zeta^2, \zeta^4, \zeta^6, \zeta^7, \zeta^{10}, \zeta^{11}, \zeta^{12}, \zeta^{18}, \zeta^{21}$,
\item $\alpha=0$ and $\beta = \zeta, \zeta^3, \zeta^9, \zeta^{13}, \zeta^{14}, \zeta^{16}, \zeta^{22}, \zeta^{26}, 0$.
\end{enumerate}
The orbits of $\psi$ are as follows:
\begin{enumerate}
    \item $9$ orbits, each with $26$ points, of the form $\mathcal Q_{\zeta^i}\cup\mathcal Q_{\zeta^{i+13}}$, where $(\zeta,\zeta^i)\in\mathcal X_{3,3}$ (so $i\in\{2,4,6,7,10,11,12,18,21\}$),
    \item $4$ orbits, each with $2$ points, of the form $\mathcal Q_{\zeta^j}\cup\mathcal Q_{\zeta^{j+13}}=\{(0,\zeta^j),(0,\zeta^{j+13})\}$, for $j\in\{1,3,9,13\}$,
    \item $1$ orbit containing only $(0:0)$.
\end{enumerate}
\end{example}

By \cite[Theorem 3.2]{decreasing_norm_trace_24}, we can choose any $k$ points as an information set for a norm-trace code. In particular, for codes of small enough dimension, we can choose an information set $I$ such that
$$\bigcup_{j=1}^{q-1}\mathcal Q_{\zeta^{\ell+j\frac{q^s-1}{q-1}}}\subseteq[n]\backslash I$$
for some fixed $\ell$ such that $(\zeta:\zeta^\ell)\in\mathcal X_{q,s}(\F_{q^s})$. We then obtain a result equivalent to Theorem \ref{P:y_burst} for norm-trace codes.

\begin{proposition}
    Let $C(D,\gamma P_\infty)$ be an $[n,k]$ one-point norm-trace code over $\F_{q^s}=\langle\zeta\rangle$ such that $k\leq q^{2s-1}-q^s+1.$ Then $\{\psi_{\beta,1}:\beta\in\ker(Tr)\}$ is a partial $\frac{q^s-1}{q-1}$-PD set for errors indexed by $\mathcal Q_b$ for some $b\notin\ker(Tr)$.
\end{proposition}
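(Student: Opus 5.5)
The argument mirrors the proof of Theorem \ref{P:y_burst}, using the orbit structure of $\psi=\psi_{0,\zeta}$ together with the freedom, from Proposition \ref{arb_pts_info_set}, to choose the information set arbitrarily.

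First, fix $\ell$ with $(\zeta,\zeta^\ell)\in\mathcal X_{q,s}(\F_{q^s})$ and let
$$O:=\bigcup_{j=1}^{q-1}\mathcal Q_{\zeta^{\ell+j\frac{q^s-1}{q-1}}},$$
which is an orbit of $\psi$ of size $q^s-1$. There are $q^{2s-1}$ affine points, and $k\leq q^{2s-1}-q^s+1$. Hence at least $q^s-1$ points lie off the information set, and Proposition \ref{arb_pts_info_set} (via \cite[Theorem 3.2]{decreasing_norm_trace_24}) lets us choose $I$ among the points of $\mathcal X_{q,s}(\F_{q^s})\setminus O$. Then $O\subseteq\H$. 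Each $\mathcal Q_c$ with $c\notin\ker(Tr)$ has $|\ker(N)|=\frac{q^s-1}{q-1}$ points, namely $\{(\alpha a,c):\alpha\in\ker(N)\}$; this is the burst size in the statement.

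Next, take $b\notin\ker(Tr)$. The automorphism $\psi_{\beta,1}$ fixes $x$ and sends $y\mapsto y+\beta$, so it maps $\mathcal Q_b$ bijectively onto $\mathcal Q_{b+\beta}$. We therefore need some $\beta\in\ker(Tr)$ with $b+\beta\in\{\zeta^{\ell+j\frac{q^s-1}{q-1}}:j\in[q-1]\}$. The values $y=b'$ that occur on points with $x\neq 0$ are exactly those with $Tr(b')\neq 0$. The coset $b+\ker(Tr)$ consists of all $b'$ with $Tr(b')=Tr(b)$. So it suffices to show that, for each $t\in\F_q^*$, the set $\{\zeta^{\ell+j\frac{q^s-1}{q-1}}\}$ contains an element of trace $t$.

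Write $\mu:=\zeta^{\frac{q^s-1}{q-1}}$, which is a generator of $\F_q^*$. Then $Tr(\zeta^\ell\mu^j)=\mu^jTr(\zeta^\ell)$. Since $Tr(\zeta^\ell)=N(\zeta)\neq0$, as $j$ runs over $[q-1]$ these traces run over all of $\F_q^*$. This is the key step, and it matches the description of the orbits listed above. Hence for each $b\notin\ker(Tr)$ there is $\beta\in\ker(Tr)$ with $\psi_{\beta,1}(\mathcal Q_b)=\mathcal Q_{b+\beta}\subseteq O\subseteq\H$, so the set $\{\psi_{\beta,1}:\beta\in\ker(Tr)\}$ moves every such burst into check positions.

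The main obstacle is choosing the information set compatibly with $O$. The statement says ``some $b$'', but the argument gives all $b\notin\ker(Tr)$ simultaneously once $O\subseteq\H$. I also need to confirm that the points of $O$ can all be placed in the complement of $I$; this requires $n-k\geq q^s-1$, which is exactly the assumed bound on $k$.
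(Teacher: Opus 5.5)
Your proof is correct and follows the paper's argument. You place the $\psi$-orbit $O=\bigcup_{j=1}^{q-1}\mathcal Q_{\zeta^{\ell+j\frac{q^s-1}{q-1}}}$ among the check positions (via Proposition \ref{arb_pts_info_set}, since $n-k\geq q^s-1$), then use $\psi_{\beta,1}$ to translate $\mathcal Q_b$ onto some $\mathcal Q_{b+\beta}\subseteq O$. Your computation $Tr(\zeta^\ell\mu^j)=\mu^jN(\zeta)$, with $\mu$ generating $\F_q^*$, justifies the existence of the needed $\beta\in\ker(Tr)$, which the paper only asserts, and it shows the argument works for every $b\notin\ker(Tr)$.
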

\begin{proof}
    Fix $\ell$ such that $(\zeta,\zeta^\ell)\in\mathcal X_{q,s}(\F_{q^s})$. Since $k\leq q^{2s-1}-q^s+1$, there exists a set of points $\H$ associated with check positions such that
    $$\bigcup_{j=1}^{q-1}\mathcal Q_{\zeta^{\ell+j\frac{q^s-1}{q-1}}}\subseteq\H.$$
    Fix $b\in\F_{q^s}$ with $b\notin\ker(Tr)$. Then there exists some $\beta\in\ker(Tr)$ such that $b+\beta=\zeta^{\ell+j\frac{q^s-1}{q-1}}$ for some $j\in[q-1]$. It follows that for all $(a,b)\in\mathcal Q_b$,
    $$\psi_{\beta,1}(a,b)=(a,b+\beta)\in\bigcup_{j=1}^{q-1}\mathcal Q_{\zeta^{\ell+j\frac{q^s-1}{q-1}}}.$$
    That is, $\psi_{\beta,1}(\mathcal Q_b)\subseteq\H$. Moreover, $\psi_{\beta,1}$ can correct any error of size $\frac{q^s-1}{q-1}$ that corresponds to positions indexed by
    $$\psi_{\beta,1}^{-1}\left(\bigcup_{j=1}^{q-1}\mathcal Q_{\zeta^{\ell+j\frac{q^s-1}{q-1}}}\right).$$
    Hence, we have a partial $\frac{q^s-1}{q-1}$-PD set.
\end{proof}

\begin{example}
    Consider the norm-trace curve $\mathcal X_{3,3}$ as in Example \ref{E:nt_example}, with $\F_{27}^*=\langle\zeta\rangle$. Recall that $\ker(Tr)=\{0,\zeta,\zeta^3,\zeta^9,\zeta^{13},\zeta^{14},\zeta^{16},\zeta^{22},\zeta^{26}\}$. We designate $\mathcal Q_{\zeta^{2}}\cup\mathcal Q_{\zeta^{15}}$ as containing only check digits. Observe that
    \begin{align*}
        \psi_{0,1}^{-1}(\mathcal Q_{\zeta^{2}}\cup\mathcal Q_{\zeta^{15}})&=\mathcal Q_{\zeta^{2}}\cup\mathcal Q_{\zeta^{15}},\\
        \psi_{\zeta,1}^{-1}(\mathcal Q_{\zeta^{2}}\cup\mathcal Q_{\zeta^{15}})&=\mathcal Q_{\zeta^{4}}\cup\mathcal Q_{\zeta^{23}},\\
        \psi_{\zeta^{3},1}^{-1}(\mathcal Q_{\zeta^{2}}\cup\mathcal Q_{\zeta^{15}})&=\mathcal Q_{\zeta^{18}}\cup\mathcal Q_{\zeta^{24}},\\
        \psi_{\zeta^{9},1}^{-1}(\mathcal Q_{\zeta^{2}}\cup\mathcal Q_{\zeta^{15}})&=\mathcal Q_{\zeta^{7}}\cup\mathcal Q_{\zeta^{19}},\\
        \psi_{\zeta^{13},1}^{-1}(\mathcal Q_{\zeta^{2}}\cup\mathcal Q_{\zeta^{15}})&=\mathcal Q_{\zeta^{21}}\cup\mathcal Q_{\zeta^{25}},\\
        \psi_{\zeta^{14},1}^{-1}(\mathcal Q_{\zeta^{2}}\cup\mathcal Q_{\zeta^{15}})&=\mathcal Q_{\zeta^{10}}\cup\mathcal Q_{\zeta^{17}},\\
        \psi_{\zeta^{16},1}^{-1}(\mathcal Q_{\zeta^{2}}\cup\mathcal Q_{\zeta^{15}})&=\mathcal Q_{\zeta^{11}}\cup\mathcal Q_{\zeta^{5}},\\
        \psi_{\zeta^{22},1}^{-1}(\mathcal Q_{\zeta^{2}}\cup\mathcal Q_{\zeta^{15}})&=\mathcal Q_{\zeta^{6}}\cup\mathcal Q_{\zeta^{20}},\\
        \psi_{\zeta^{26},1}^{-1}(\mathcal Q_{\zeta^{2}}\cup\mathcal Q_{\zeta^{15}})&=\mathcal Q_{\zeta^{12}}\cup\mathcal Q_{\zeta^{8}}.
    \end{align*}
Then, we can correct any burst errors associated with the points in the sets on the right-hand side of the list above.
\end{example}

\section{Conclusion} \label{S:concl}
In this paper, we have established permutation decoding that corrects specific error patterns in one-point Hermitian and norm-trace codes. To our knowledge, this is the first work demonstrating permutation decoding for algebraic geometry codes on curves of positive genus. It would be interesting to determine the automorphism groups of other families of algebraic geometry codes, such as those arising from Kummer extensions or Castle curves, and to explore their potential for permutation decoding. 

\bibliographystyle{plain}
\bibliography{bib_glm}

\end{document}